\newcommand{\inR}{\in \mathbb{R}}
\newcommand{\inZ}{\in \mathbb{Z}}
\newcommand{\C}{ \mathbb{C}}
\newcommand{\R}{ \mathbb{R}}
\newcommand{\Z}{ \mathbb{Z}}
\newcommand{\N}{ \mathbb{N}}
\newcommand{\eqdef}{\stackrel{\vartriangle}{=}}
\newcommand{\dint}{{\rm d}}
\newcommand{\Fourier}{ \mathcal{F}}
\newcommand{\bw}{{\boldsymbol \omega}}
\newcommand{\bk}{{\boldsymbol k}}
\DeclareMathOperator*{\esssup}{ess\,sup}
\DeclareMathOperator*{\prox}{prox}
\DeclareMathOperator*{\argmin}{argmin}
\newcommand{\Tr}{\mathsf{T}}
\newcommand{\HTop}{\mathsf{H}}
\def\V#1{{\boldsymbol{#1}}}         
\def\Spc#1{{\mathcal{#1}}}  
\def\M#1{{\bf{#1}}}  
\def\Op#1{{\mathrm{#1}}}  
\def\ee{\mathrm{e}} 
\def\jj{\mathrm{j}} 
\def\Indic{\mathbbm{1}}
\def\Identity{\mathrm{Id}} %
\newcommand{\toC}{\xrightarrow{\mbox{\tiny \ \rm c.  }}}
\begin{document}

\title*{Parseval Convolution Operators\\ and Neural Networks}
\author{Michael Unser\orcidID{0000-0003-1248-2513} and\\ Stanislas Ducotterd\orcidID{0009-0006-2047-5179}}
\institute{Michael Unser \at EPFL, Lausanne, Switzerland, \email{michael.unser@epfl.ch}
\and Stanislas Ducotterd \at EPFL, Lausanne, Switzerland, \email{stanislas.Ducotterd@epfl.ch}}
%
%
\maketitle

\abstract*{We first establish a kernel theorem that characterizes all linear shift-invariant (LSI) operators acting on discrete multicomponent signals. This result naturally leads to the identification of the Parseval convolution operators as the class of energy-preserving filterbanks. We then present a constructive approach for the design/specification of such filterbanks via the chaining of elementary Parseval modules, each of which being parameterized by an orthogonal matrix or a 1-tight frame. Our analysis is complemented with explicit formulas for the Lipschitz constant of all the components of a convolutional neural network (CNN), which gives us a handle on their stability. Finally, we demonstrate the usage of those tools with the design of a CNN-based algorithm for the iterative reconstruction of biomedical images. Our algorithm falls within the plug-and-play framework for the resolution of 
inverse problems. It yields better-quality results than the 
sparsity-based methods used in compressed sensing, while offering essentially the same convergence and robustness guarantees.}

\abstract{We first establish a kernel theorem that characterizes all linear shift-invariant (LSI) operators acting on discrete multicomponent signals. This result naturally leads to the identification of the Parseval convolution operators as the class of energy-preserving filterbanks. We then present a constructive approach for the design/specification of such filterbanks via the chaining of elementary Parseval modules, each of which being parameterized by an orthogonal matrix or a 1-tight frame. Our analysis is complemented with explicit formulas for the Lipschitz constant of all the components of a convolutional neural network (CNN), which gives us a handle on their stability. Finally, we demonstrate the usage of those tools with the design of a CNN-based algorithm for the iterative reconstruction of biomedical images. Our algorithm falls within the plug-and-play framework for the resolution of 
inverse problems. It yields better-quality results than the 
sparsity-based methods used in compressed sensing, while offering essentially the same convergence and robustness guarantees.}

\section{Introduction}
\label{Sec:Intro}
The goal of this chapter is twofold.  The first objective is to characterize a special type of convolutional operators that are 
robust and inherently stable because of their Parseval property. The second objective is to showcase the use of these operators in the design of thrustworthy neural-network-based algorithms for signal and image processing.
Our approach is deductive, in that it relies on the higher-level tools of functional analysis 
to identify the relevant operators based on their fundamental properties; namely, linearity, shift-invariance (LSI), and energy conservation (Parseval).

The study of
LSI operators (a.k.a.\ filters) relies heavily on the Fourier transform and is a central topic in linear-systems theory and signal processing  \cite{Kailath1980,Oppenheim1999,Vetterli2014}. Hence, the first step of our investigation is to extend the classic framework to accommodate the kind of processing performed in convolutional neural networks (CNNs), where the convolutional layers have multichanel inputs and outputs.
We do so by adopting an operator-based formalism with appropriate Hilbert spaces, which then also makes the description of CNNs mathematically precise.
As one may expect, the corresponding LSI operators are characterized  by their impulse response or, equivalently, by their frequency response, the extension to the classic setting of signal processing being that these entities now both happen to be matrix-valued (see Theorem  \ref{Theo:L2VecBound}). 

Our focus on Parseval operators is motivated by the desire to control the stability of the components of CNNs, which can be quantified mathematically by their Lipschitz constant (see Section \ref{Sec:ParsevalOperator}). Indeed, it is known that the stability of conventional deep neural networks degrades (almost) exponentially with their depth \cite{Zou2019}. This lack of stability partly explains why CNNs can occasionally hallucinate, which is unacceptable for critical applications such as, for instance, diagnostic imaging. Our proposed remedy is to constrain the Lipschitz constant of each layer, with the ``ultra-stable'' configuration being the one where each component is non-expansive; i.e., with a Lipschitz constant no greater than $1$.
Parseval operators are exemplar in this respect since they preserve energy, which, in effect, turns the (worst-case) Lipschitz bound into an equality. Additional features that motivate their usage are as follows:
\begin{enumerate}
\item Parseval convolution operators have a remarkably simple theoretical description, which is given in Proposition \ref{Theo:PasevalFilters};
\item they admit convenient parametric representations (see Section \ref{Sec:Parameterization}) that are directly amenable to an optimization in standard computational frameworks for machine learning, such as PyTorch.
\end{enumerate}
However, one must acknowledge that there is no free lunch. Any attempt to stabilize a neural network by
constraining the Lipschitz constant of each layer will necessarily reduce its expressivity, as documented in \cite{Huster2019,Ducotterd2024}. The good news is that this effect is less pronounced when the linear layers have the Parseval property, as confirmed in our experiments (Parseval vs. spectral normalization). In fact, we shall demonstrate that the use of Parseval CNNs (as substitute for the classic proximity operator of convex optimization) results in a substantial improvement in the quality of image reconstruction over that of the traditional sparsity-based methods of compressed sensing, while it offers essentially  the same theoretical guarantees (consistency and stability).

\subsection{Related Works and Concepts}
Conceptually, Parseval operators are the infinite-dimensional generalization of orthogonal matrices (one-to-one scenario) and, more 
generally, of $1$-tight frames (one-to-many scenario) \cite{Christensen1995}. The latter involve rectangular matrices $\M A \in \R^{M \times N}$ with $M\ge N$ and the property that $\M A^\Tr \M A=\M I_N$ (identity).
When the operator is LSI, then it is diagonalized by the Fourier transform---a property that can be exploited for the design of Parseval filterbanks.

The specification of filters for the orthogonal wavelet transform \cite{Daubechies1992,Mallat1998,Meyer1990} is a special instance of the one-to-one scenario. In fact, there is a comprehensive theory for the design
of perfect-reconstruction filterbanks  \cite{Strang:1996,Vetterli:1995}, the orthogonal ones being sometimes referred to as lossless systems \cite{Vaidyanathan.93}. It includes general factorization results for paraunitary matrices associated with finite impulse response (FIR) filterbanks of a given McMillan degree \cite[Theorem 14.4.1, p. 736]{Vaidyanathan.93} or of a given size \cite{Turcajova1994b}, with the caveat that these only hold in the one-dimensional setting. There are also adaptations of those results for linear-phase filters \cite{Soman1993,Tran2000,Turcajova1994}.

The one-to-many scenario (wavelet frames) caught the interest of researchers in the late '90s, motivated by an early application in texture analysis \cite{Unser1995d} that involved a computational architecture that is a ``handcrafted'' form
of CNN. Such redundant wavelet designs are less constrained than the orthogonal ones. They go under the name of
oversampled filterbanks \cite{Cvetkovic1998}, oversampled wavelet transforms \cite{Bolcskei1998}, undecimated wavelet transform \cite{Luisier2011b}, lapped transforms \cite{Chebira2007}, or, more generally,
tight (wavelet) frames \cite{Aldroubi1995,Christensen2003,Kovacevic2007b,Kovacevic2007,Kovacevic2007a}.

The use of Parseval operators in the context of neural networks is more recent. The new twist brought forth by machine learning is that the filters can now be learned to provide the best performance for a given computational task, which is feasible under the availability of sufficient training data and computational power. The first attempts to orthogonalize the linear layers of a neural network were motivated by the desire to avoid vanishing gradients and to improve robustness against adversarial attacks \cite{Anil_PMLR2019,Cisse2017,Hasannasab2020,Xiao_dynamical_2018}. Several research teams  \cite{HuangCVPR2020,Li2019,Su2022} then proposed solutions for the training of orthogonal convolution layers that are inspired by the one-dimensional factorization theorems uncovered during the development of the wavelet transform. There are also approaches
that operate directly in the Fourier domain \cite{Trockman2021}.

\subsection{Road Map}
This chapter is organized as follows.

We start with a presentation of background material in Section \ref{Sec:Background}. First, we set the notation and introduce the Hilbert spaces for the representation of $d$-dimensional vector-valued signals, such as $\ell_2^N(\Z^d)$, whose elements are $N$-component  discrete signals or images. We then move on to the discrete Fourier transform in Section \ref{Sec:DiscretFourier}. This is complemented with a discussion of fundamental continuity/stability properties of operators in the general context of Banach/Hilbert spaces in Section \ref{Sec:ParsevalOperator}.

In Section \ref{Sec:VecValued}, we focus on the discrete LSI setting and identify the complete family of continuous LSI operators $\Op T_{\rm LSI}:\ell_2^N(\Z^d) \to \ell_2^M(\Z^d)$ (Theorem \ref{Theo:L2VecBound}), including the determination of their Lipschitz constant. These operators
are multichannel filters with $N$-channel inputs and $M$-channel outputs. They are uniquely specified by their {\em matrix-valued impulse response} or, equivalently, by their {\em matrix-valued frequency response}. An important subclass are the LSI-Parseval operators; these are identified in Proposition \ref{Theo:PasevalFilters} as convolution operators with a {\em paraunitary} frequency response. 

In Section \ref{Sec:Parameterization}, we develop a constructive approach for the design/specification of Parseval filterbanks. The leading idea is to generate higher-complexity filterbanks through the chaining of elementary Parseval modules, each 
being parameterized by a unitary matrix or, eventually, a $1$-tight frame (see Table \ref{Tab:ParsevalFilters}).

Section \ref{sec:PnPReconstruction} is devoted to the application of our framework to the problem of biomedical image reconstruction. Our approach revolves around the design of
a robust $1$-Lipschitz CNN for image denoising that mimics the architecture of DnCNN \cite{ZZCMZ2017}---a very popular  image denoiser. The important twist is that, unlike DnCNN, the convolution layers of our network are constrained to be Parseval, which makes our denoiser compatible with the powerful plug-and-play (PnP) paradigm for the resolution of linear inverse problems \cite{Chan2016plug,Kamilov2023plug,Sun2021,Venkatakrishnan2013plug}. We first provide mathematical support for this procedure in the form of convergence guarantees and stability bounds. We then demonstrate the feasibility of the approach for MRI reconstruction and report experimental results where it significantly outperforms the standard technique (total-variation-regularized reconstruction) used in compressed sensing.

\section{Mathematical Background}
\label{Sec:Background}
\subsection{Notation}
We use boldface lower and upper case letters to denote vectors and matrices, respectively (e.g., $\M u \in \R^N$ and $\M U \in \C^{M \times N}$). Specific instances are  $\M e_n$ (the $n$th element of the canonical basis in $\R^N$) and $\M I_N=[\M e_1 \dots \M e_N]$ (the unit matrix of size $N$).

A discrete multidimensional scalar signal (e.g., the input or output of a convolutional neural network) is a sequence $(x[\bk])_{\bk \in \Z^d}$ of real numbers that, depending on the context, will be denoted as $x \in \ell_2(\Z^d)$ (i.e., as a member of a Hilbert space),
or $x[\cdot]$, where ``$\cdot$''  is a placeholder for the indexing variable. Our use of square brackets follows the convention of signal processing, as 
reminder of the discrete nature of the objects. A vector-valued signal
$\M x[\cdot]=(x_1[\cdot],\dots,x_N[\cdot])$ 
is an indexed sequence of vectors $\M x[\bk]=(x_1[\bk],\dots,x_N[\V k])\in \R^N$ with $\bk$ ranging over $\Z^d$.
Likewise, $\M X[\cdot]=\begin{bmatrix}\M x_1[\cdot] &\cdots &\M x_M[\cdot]\end{bmatrix}$ is a matrix-valued signal or sequence.
An alternative representation of such sequences is 
\begin{align}
\M X[\cdot]=\sum_{\V n \in \Z^d}\M X[\V n]\delta[\cdot- \V n]
\end{align}
 where $\delta[\cdot-\V n]$ denotes the (scalar) Kronecker impulse shifted by $\V n$, with
$\delta[\M 0]=1$ and $\delta[\bk]=0$ for $\bk \in \Z^d \backslash\{\M 0\}$.

In the same spirit, we use the notation $f(\cdot)$, $\M f(\cdot)$, $\M F(\cdot)$
to designate objects that are respectively scalar, vector-valued, and matrix-valued functions of a continuously varying index such as $\V x \in \R^d$ or $\bw \in \mathbb{T}^d=[-\pi,+\pi]^d$ (the frequency variable).

The symbol $^\vee$ denotes the flipping operator with $\M x^\vee[\bk]\eqdef\M x[-\bk]$ for all $\bk \in \Z^d$, while $\M U^\HTop \in \C^{N \times M}$
is the Hermitian transpose of the complex matrix $\M U \in \C^{M \times N}$ with $[\M U^\HTop]_{n,m}\eqdef[\overline{\M U}]_{m,n}$ (transpose with complex conjugation). 

Our primary Hilbert space of vector-valued signals is $\ell^N_2(\Z^d)=\ell_2(\{1,\dots,N\} \times \Z^d )=\ell_2(\Z^d) \times \dots \times \ell_2(\Z^d)$ \big($N$ occurrences of $\ell_2(\Z^d)$\big), which is the direct-product extension of $\ell_2(\Z^d)$. Specifically,
$$\ell^N_2(\Z^d)=\left\{\M x[\cdot]: \Z^d \to \R^N \mbox{ s.t. }  \|\M x [\cdot]\|_{\ell^N_2(\Z^d)} <\infty\right\}$$
with 
\begin{align}
\label{Eq:lenorm}
\|\M x [\cdot]\|_{\ell^N_2(\Z^d)}\eqdef 
\left(\sum_{n=1}^N \sum_{\bk \inZ^d} |x_n[\bk]|^2\right)^{1/2}.
\end{align}
By invoking a density argument,
we can interchange the order of summation in \eqref{Eq:lenorm} so that
$$\|\M x [\cdot]\|_{\ell^N_2(\Z^d)}=
\big\| ( \|x_1\|_{\ell_2(\Z^d)},\dots,\|x_N\|_{\ell_2(\Z^d)})\big\|_{2}=\big\| \|\M x [\cdot]\|_2\big\|_{\ell_2(\Z^d)},$$
where 
$$
\|\M u\|_2\eqdef \displaystyle\left(\sum_{n=1}^N |u_n|^2\right)^{1/2}$$
is the conventional Euclidean norm of the vector $\M u=(u_1,\dots,u_N)\in \C^N$.

\subsection{The Discrete Fourier Transform and Plancherel's Isomorphism}
\label{Sec:DiscretFourier}
The discrete Fourier transform of a signal $x[\cdot] \in \ell_1(\Z^d) \subset \ell_2(\Z^d)$ is  defined as
\begin{align}
\label{Eq:Fourierl1}
\hat x(\bw)\eqdef \Fourier_{\rm d}\{x[\cdot]\}(\bw)=\sum_{\bk \inZ^d} x[\bk] \ee^{-\jj\langle \bw, \bk\rangle},\quad \bw \in \R^d.
\end{align}
The function $\hat x: \R^d \to \C$ is continuous, bounded
and $2\pi$-periodic. It is therefore entirely specified by its main period $\mathbb{T}^d=[-\pi,\pi]^d$.
The original signal can be recovered by inverse Fourier transformation as
\begin{align}
\label{Eq:InvFourier}
x[\V k]=\Fourier^{-1}_{\rm d}\{\hat a\}[\V k]=\int_{\mathbb{T}^d} \hat x(\bw) \ee^{\jj\langle \bw, \bk\rangle} \frac{\dint \bw}{(2 \pi)^d},\quad \V k\inZ^d.
\end{align}
By interpreting the infinite sum in \eqref{Eq:Fourierl1} as an appropriate limit, one  then extends the definition of the Fourier transform to encompass all square-summable signals. This yields the extended operator
$\Fourier_{\rm d}: \ell_2(\Z^d) \to L_2(\mathbb{T}^d)$, where $L_2(\mathbb{T}^d)$
is the space of measurable complex Hermitian-symmetric and square-integrable functions on $\mathbb{T}^d$. The latter is a Hilbert space equipped with the Hermitian inner product
\begin{align}
\langle \hat x, \hat y \rangle_{L_2({\mathbb{T}^d})}
\eqdef \int_{\mathbb{T}^d} \hat x(\bw) \overline{\hat y(\bw)} \;\frac{\dint \bw}{(2 \pi)^d}.
\end{align}
The 
Fourier transform $\Fourier_{\rm d}: \ell_2(\Z^d) \to L_2(\mathbb{T}^d)$ is a bijective isometry (unitary map between two Hilbert spaces)
with $\Fourier^{-1}_{\rm d}: L_2(\mathbb{T}^d) \to \ell_2(\Z^d)$, where the inverse transform is still specified by \eqref{Eq:InvFourier} with an extended Lebesgue interpretation of the integral. 
Indeed, by invoking the Cauchy-Schwartz inequality, we get that
$$\int_{\mathbb{T}^d} \left|\hat x(\bw) \ee^{\jj\langle \bw, \bk\rangle} \right|\frac{\dint \bw}{(2 \pi)^d}= \int_{\mathbb{T}^d} |\hat x(\bw)| \times 1 \;\frac{\dint \bw}{(2 \pi)^d}\le \|\hat x\|_{L_2} \left(\int_{\mathbb{T}^d} 1 \;\frac{\dint \bw}{(2 \pi)^d}\right)^\frac{1}{2}=\|\hat x\|_{L_2},$$
which ensures the well-posednessed of \eqref{Eq:InvFourier} for all $\hat x \in L_2(\mathbb{T}^d)$.

The cornerstone of the $\ell_2$ theory of the Fourier transform is the Plancherel-Parseval identity
\begin{align}
\label{Eq:Plancherel}
\forall x, y \in \ell_2(\Z^d): \quad \langle x, y \rangle_{\ell_2(\Z^d)}\eqdef\sum_{\bk \in \Z^d} x[\bk]\overline{y[\bk]}=\langle \hat x, \hat y \rangle_{L_2(\mathbb{T}^d)}.
\end{align}
It ensures that the inner product is preserved in the Fourier domain.

The vector-valued extension of these relations is immediate if one defines
the Fourier transform of a vector-valued signal $\M x[\cdot] \in \ell_2^N(\Z^d)$ as
\begin{align}
\label{Eq:FourierVec}
\forall \bw \in \R^d: \Fourier_{\rm d}\{\M x[\cdot]\}(\bw)=\Fourier_{\rm d}\left\{
\left[
\begin{array}{c}
x_1[\cdot]  \\
  \vdots      \\
x_N[\cdot]
\end{array}
\right]\right\}(\bw)\eqdef \left[
\begin{array}{c}
\hat x_1(\bw)   \\
  \vdots      \\
\hat x_N(\bw) 
\end{array}
\right]=\widehat{ \M x}(\bw)
\end{align}
and its inverse as
\begin{align}
\label{Eq:FourierInvVec}
\Fourier^{-1}_{\rm d}\{\widehat{\M x}\}
=\Fourier^{-1}_{\rm d}\left\{
\left[
\begin{array}{c}
\hat x_1   \\
  \vdots      \\
\hat x_N
\end{array}
\right]\right\}
=
\left[
\begin{array}{c}
\Fourier^{-1}_{\rm d}\{\hat x_1\}   \\
  \vdots      \\
\Fourier^{-1}_{\rm d}\{\hat x_N\}\end{array}
\right]=\left[
\begin{array}{c}
x_1[\cdot]   \\
  \vdots      \\
x_N[\cdot] \end{array}
\right].
\end{align}
The corresponding vector-valued version of Plancherel's identity reads
\begin{align}
\forall \M x, \M y \in \ell^N_2(\Z^d): \quad \langle \M x, \M y \rangle_{\ell^N_2(\Z^d)}&\eqdef\sum_{n=1}^N \sum_{\bk \in \Z^d} x_n[\bk]\overline{y_n[\bk]} \nonumber\\
&=\langle \widehat{\M x}, \widehat{\M y} \rangle_{L^N_2(\mathbb{T}^d)}\eqdef\sum_{n=1}^N\int_{\mathbb{T}^d} \hat x_n(\bw) \overline{\hat y_n(\bw)} \;\frac{\dint \bw}{(2 \pi)^d}.
\label{Eq:L2innerprod}
\end{align}
The Plancherel-Fourier isomorphism is then expressed as $\Fourier_{\rm d}: \ell_2^N(\Z^d) \to 
L_2^N(\mathbb{T}^d)$ and $\Fourier^{-1}_{\rm d}: L_2^N(\mathbb{T}^d) \to \ell_2^N(\Z^d)$ where $L_2^N(\mathbb{T}^d)$ is the Hilbert space of complex-valued Hermitian-symmetric functions associated with the inner product \eqref{Eq:L2innerprod} for $(2\pi)$-periodic vector-valued functions.
\subsection{1-Lip and Parseval Operators}
\label{Sec:ParsevalOperator}
The transformations that occur in a neural network can be described
through the action of some operators $\Op T$ that map any member $x$ of a vector space $\Spc X$ (for instance, a specific input of the network or of one of its layers) into some element $y=\Op T\{x\}$ of
another vector space $\Spc Y_i$ (e.g., the output of the network or any of its intermediate layers). These operators $\Op T: \Spc X \to \Spc Y$ can be linear (as in the case of a convolution layer) or, more generally, nonlinear. A minimal requirement is that the $\Op T$ be {\em continuous}, which is a 
mathematical precondition 
tied to the underlying topologies.

\begin{definition}
\label{Def:Continuity}
Consider the (possibly nonlinear) mapping $\Op T: \Spc X \to \Spc Y$, where $\Spc X=(\Spc X,\|\cdot\|_{\Spc X})$ and $\Spc Y=(\Spc Y,\|\cdot\|_{\Spc Y})$ are two complete normed spaces (e.g., Banach or Hilbert spaces). Then,
$\Op T$ can exhibit the following forms of continuity.
\begin{enumerate}
\item Continuity at $x_0 \in \Spc X$: For any $\epsilon>0$, there exists some $\mu>0$ such that, for any $x \in \Spc X$ with
$\|x - x_0\|_{\Spc X}< \mu$, it holds that $\|\Op T\{x\}-\Op T\{x_0\}\|_{\Spc Y}< \epsilon$.
\item Uniform continuity on $\Spc X$: For any $\epsilon>0$, there exists some $\mu>0$ such that, for any $x, x_0 \in \Spc X$ with
$\|x - x_0\|_{\Spc X}< \mu$, it holds that $\|\Op T\{x\}-\Op T\{x_0\}\|_{\Spc Y}< \epsilon$.
\item Lipschitz continuity: There exists some constant $L>0$ such that
\begin{align}
\label{Eq:Lipcshitz}
\forall x,x_0 \in \Spc X: \|\Op T\{x\}-\Op T \{x_0\}\|_{\Spc Y}\le L \|x-x_0\|_{\Spc X}.
\end{align}
\end{enumerate}
\end{definition}
\vspace*{1ex}
The third form (Lipschitz) is obviously also the strongest with $3 \Rightarrow 2 \Rightarrow 1$.
The smallest $L$ for which \eqref{Eq:Lipcshitz} holds is called the Lipschitz constant of $\Op T$ with
\begin{align}
\label{Eq:LipschitzNorm}
{\rm Lip}(\Op T)=\sup_{ \forall x,x_0 \in \Spc X,\  x\ne x_0}  \frac{\|\Op T\{x\}-\Op T\{x_0\}\|_{\Spc Y}}{\|x-x_0\|_{\Spc X}}.
\end{align}

\begin{definition}
An operator $\Op T: \Spc X \to \Spc Y$ is said to be of $1$-Lip type if ${\rm Lip}(\Op T)=1$.
\end{definition}
The $1$-Lip operators  are of special interest to us because they are inherently stable: a small perturbation of their input can only induce a small deviation of their output.
Moreover, they can  be chained at will without any degradation in overall stability because ${\rm Lip}(\Op T_2 \circ \Op T_1)\le{\rm Lip}(\Op T_2) {\rm Lip}(\Op T_1)=1$.

For linear operators, the graded forms of continuity in Definition \ref{Def:Continuity} can all be related to one overarching
simplifying concept: the {\em boundedness} of the operator. The two key ideas there are: (i) a linear operator is (locally) continuous at any $x_0\in \Spc X$ if and only if it is continuous at $0$; and, (ii) it is uniformly continuous if and only if it is bounded \cite[Theorem 2.9-2, p. 84]{Ciarlet2013}. 
Finally, there is one very attractive form of $1$-Lip linear operators for which \eqref{Eq:Lipcshitz} holds as an equality, rather than a ``worst-case'' inequality.
%
%
%
%
To make this explicit, we now recall some basic properties of linear operators acting on Hilbert spaces and identify the subclass of Parseval operators, which are norm- as well as inner-product (angle) preserving.
\begin{definition}
\label{Def:Parseval}
Let $\Spc X$ and $\Spc Y$ be two Hilbert spaces.
The most basic Hilbertian properties of a linear operator
$\Op T: \Spc X \to \Spc Y$ are as follows.
\begin{enumerate}
\item Boundedness (continuity): There exists a constant $B<\infty$ such that
\begin{align}
\label{Eq:Boundedup}
\forall x\in \Spc X:\quad \|\Op T\{x\}\|_{\Spc Y} \le B\, \| x\|_{\Spc X},
\end{align}
with the smallest $B$ in \eqref{Eq:Boundedup} being the norm of the operator denoted by
$\|\Op T\|$.
\item Boundedness from below (injectivity): There exists a constant $0<A$ such that
\begin{align}
\label{Eq:Boundedlow}
\forall x\in \Spc X:\quad A\; \| x\|_{\Spc X}\le \|\Op T\{x\}\|_{\Spc Y}.
\end{align}
\item Isometry: $\Op T$ is norm-preserving (or Parseval), meaning that both \eqref{Eq:Boundedup} and \eqref{Eq:Boundedlow} hold with $A=B=1$.
\end{enumerate}
\end{definition}

To identify the critical bounds in Definition \ref{Def:Parseval}, we observe that, for any $x \in \Spc X\backslash\{0\}$,
\begin{align}
\label{Eq:normalisation}
A\le \frac{\|\Op T\{x\}\|_{\Spc Y}}{\|x\|_{\Spc X}}=\|\Op T\{z\}\|_{\Spc Y} \le B \, \mbox{ with } z=\frac{x}{\|x\|_{\Spc X}}.
\end{align}
This holds by virtue of the linearity of $\Op T$ and the homogeneity property of the norm. In particular, this allows us to specify the induced norm of the operator as
\begin{align}
\label{EQ:NormOp}
\|\Op T\|\eqdef \sup_{x \in \Spc X \backslash\{0\}}\frac{\|\Op T\{x\}\|_{\Spc Y}}{\|x\|_{\Spc X}}=\sup_{z \in \Spc X:\  \|z\|_\Spc X=1}\|\Op T\{z\}\|_{\Spc Y}.\end{align}
Note that \eqref{EQ:NormOp} can be obtained by restricting \eqref{Eq:LipschitzNorm} to $x_0=0$, which then also yields $\|\Op T\|={\rm Lip}(\Op T)$ due to the linearity of $\Op T$.
The isometry property (Item 3) is by far the most constraining, as it implies the two others 
with $\|\Op T\|=1$. As it turns out, it has other remarkable consequences, which yield some alternative characterization(s).

\begin{proposition}[Properties of Parseval operators]
\label{Prop:Parseval}
Let $\Spc X$ and $\Spc Y$ be two Hilbert spaces. Then, the linear operator $\Op T: \Spc X \to \Spc Y$ is a Parseval operator if any of the following equivalent conditions holds.
\begin{enumerate}
\item Isometry \begin{align}
\label{Eq:Isometry}
\forall x\in \Spc X: \| x\|_{\Spc X}=\|\Op T\{x\}\|_{\Spc Y}.
\end{align}
\item Preservation of inner products 
\begin{align}
\label{Eq:ParsevalCondition}
\forall x_1,x_2 \in \Spc X,\quad\langle \Op T\{x_1\},  \Op T\{x_2\}\rangle_{\Spc Y}=\langle x_1,x_2\rangle_{\Spc X}.
\end{align}
\item Pseudo-inversion via the adjoint so that $\Op T^\ast \circ \Op T=\Identity: \Spc X \to \Spc Y \to \Spc X$,
where the Hermitian adjoint $\Op T^\ast: \Spc Y \to \Spc X$ is the unique linear operator such that
\begin{align}
\label{Eq:AdjointdDef}
\forall (x,y) \in \Spc X \times \Spc Y:\quad \langle \Op T\{x\}, y \rangle_{\Spc Y}=\langle x, \Op T^\ast\{y\} \rangle_{\Spc X}.
\end{align}
\end{enumerate}
\end{proposition}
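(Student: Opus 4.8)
\emph{Proof sketch.} The plan is to verify the cyclic chain of implications $1\Rightarrow 2\Rightarrow 3\Rightarrow 1$, which then establishes that all three conditions are mutually equivalent (and, via Definition~\ref{Def:Parseval}, that each characterizes the Parseval property).

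First I would prove $1\Rightarrow 2$. The idea is that, in a Hilbert space, the inner product is entirely recoverable from the norm through the polarization identity. Invoking the linearity of $\Op T$ to write $\Op T\{x_1\pm x_2\}=\Op T\{x_1\}\pm\Op T\{x_2\}$ (and, in the complex case, also $\Op T\{x_1\pm \jj x_2\}=\Op T\{x_1\}\pm \jj\Op T\{x_2\}$), I would expand $\|\Op T\{x_1\pm x_2\}\|_{\Spc Y}^2$ and apply the isometry hypothesis \eqref{Eq:Isometry} termwise; the polarization formula then collapses these norm identities into the inner-product identity \eqref{Eq:ParsevalCondition}.

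Next, for $2\Rightarrow 3$, I would first observe that setting $x_1=x_2=x$ in \eqref{Eq:ParsevalCondition} recovers the isometry \eqref{Eq:Isometry}, whence $\|\Op T\|=1<\infty$; this guarantees that $\Op T$ is bounded and that its adjoint $\Op T^\ast$ is well-defined and everywhere-defined through \eqref{Eq:AdjointdDef}. Rewriting the left-hand side of \eqref{Eq:ParsevalCondition} by means of the adjoint relation gives $\langle x_1,\Op T^\ast\Op T\{x_2\}\rangle_{\Spc X}=\langle x_1,x_2\rangle_{\Spc X}$, i.e.\ $\langle x_1,(\Op T^\ast\Op T-\Identity)\{x_2\}\rangle_{\Spc X}=0$ for every $x_1,x_2\in\Spc X$. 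Since the inner product is non-degenerate, the choice $x_1=(\Op T^\ast\Op T-\Identity)\{x_2\}$ forces $\Op T^\ast\Op T\{x_2\}=x_2$ for all $x_2$, that is, $\Op T^\ast\circ\Op T=\Identity$.

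Finally, $3\Rightarrow 1$ is the short step that closes the loop: for any $x\in\Spc X$, the assumption $\Op T^\ast\Op T=\Identity$ together with the adjoint definition \eqref{Eq:AdjointdDef} yields $\|\Op T\{x\}\|_{\Spc Y}^2=\langle \Op T\{x\},\Op T\{x\}\rangle_{\Spc Y}=\langle x,\Op T^\ast\Op T\{x\}\rangle_{\Spc X}=\langle x,x\rangle_{\Spc X}=\|x\|_{\Spc X}^2$, which is precisely \eqref{Eq:Isometry}. I expect the only genuine subtleties to be the bookkeeping in the polarization step of $1\Rightarrow 2$ (in particular, tracking the extra imaginary terms when $\Spc X,\Spc Y$ are complex rather than real) and the need to secure boundedness of $\Op T$ before invoking its adjoint in $2\Rightarrow 3$; every remaining step is a one-line computation.
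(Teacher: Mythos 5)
Your proposal is correct and takes essentially the same route as the paper: polarization/norm expansion for the passage between the isometry \eqref{Eq:Isometry} and the inner-product identity \eqref{Eq:ParsevalCondition}, then the adjoint relation \eqref{Eq:AdjointdDef} combined with non-degeneracy of the inner product to obtain $\Op T^\ast\circ\Op T=\Identity$. The only differences are organizational rather than substantive---you close a cycle $1\Rightarrow 2\Rightarrow 3\Rightarrow 1$ where the paper proves the two equivalences $1\Leftrightarrow 2$ and $2\Leftrightarrow 3$ directly---plus your explicit (and welcome) verification that $\Op T$ is bounded before invoking its adjoint, a point the paper subsumes under a cited standard existence result.
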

\begin{proof}

\item (i) $1 \Leftrightarrow 2$:
From the basic  properties of (real-valued) inner products and the linearity of $\Op T$, we have that
\begin{align*}
\|x_2-x_1\|^2_{\Spc X}&= \langle x_2-x_1,x_2-x_1 \rangle_{\Spc X}=\|x_2\|^2_{\Spc X}-2\langle x_1,x_2 \rangle_{\Spc X}+ \|x_1\|^2_{\Spc X}\\
\|\Op T\{x_2-x_1\}\|^2_{\Spc Y}&=\|\Op T\{x_2\}-\Op T\{x_1\}\|^2_{\Spc Y}= \|\Op T\{x_2\}\|^2_{\Spc Y}-2\langle\Op T\{x_1\},\Op T\{x_2\} \rangle_{\Spc Y}+ \|\Op T\{x_1\}\|^2_{\Spc Y}.
\end{align*}
By equating these two expressions, we readily deduce that \eqref{Eq:Isometry}
implies \eqref{Eq:ParsevalCondition}. Likewise, in the extended complex setting, we find that 
${\rm Re}(\langle x_1,x_2 \rangle_{\Spc X})={\rm Re}(\langle\Op T\{x_1\},\Op T\{x_2\} \rangle_{\Spc Y})$, which ultimately also yields \eqref{Eq:ParsevalCondition}. 
Conversely, by setting $x_1=x_2$ in \eqref{Eq:ParsevalCondition}, we directly get \eqref{Eq:Isometry}.\\[-1ex]

\item (ii) $2 \Leftrightarrow 3$: The existence and unicity of the adjoint operator $\Op T^\ast$ in \eqref{Eq:AdjointdDef} is a standard result in the theory of linear operators on Hilbert/Banach spaces. By setting $x=x_1$, $y=\Op T\{x_2\}$, and applying \eqref{Eq:AdjointdDef}, we rewrite \eqref{Eq:ParsevalCondition} as
\begin{align}
\label{Eq:Par2}
\forall x_1,x_2 \in \Spc X:
\langle \Op T\{x_1\},\Op T\{x_2\}\rangle_{\Spc Y}=\langle\Op T^\ast \Op T\{x_1\},x_2\rangle_{\Spc X}=\langle x_1, x_2\rangle_{\Spc X}.
\end{align}
Since the inner product separates all points in the Hilbert space (Hausdorff property), the right-hand side of \eqref{Eq:Par2} is equivalent to $\Op T^\ast \Op T\{ x_1\}=x_1$ for all $x_1 \in \Spc X$, which translates into $\Op T^\ast \Op T=\Op T^\ast \circ \Op T=\Identity$ on $\Spc X$.
\end{proof}
The classic example of a Parseval operator is the discrete Fourier transform $\Fourier_{d}: \ell_2(\Z^d) \to L_2(\mathbb{T}^d)$ with the Hilbertian topology specified
in Section \ref{Sec:DiscretFourier}. The fundamental property there is that the Hilbert spaces 
$\Spc X=\ell_2(\Z^d)$ and $\Spc Y=L_2(\mathbb{T}^d)$ are isomorphic with
$\Fourier_{d}^{-1}=\Fourier_{d}^{\ast}$ being a true inverse of $\Fourier_{d}$ (bijection), meaning that, in addition to Item 3 in Proposition \ref{Prop:Parseval}, we also have that $\Fourier_{d}\circ \Fourier_{d}^{\ast}=\Identity$ on $\Spc Y=L_2(\mathbb{T}^d)$ (right-inverse property).

By contrast, the Parseval convolution operators investigated in this paper will typically not be invertible from the right, the reason being that the effective range space $\widetilde{\Spc Y}=\Op T(\Spc X)$ is only a (closed) subspace of $\Spc Y$.

An important observation is that, in addition to linearity and continuity, all the operator properties in Definition \ref{Def:Parseval} are conserved through composition.

\begin{proposition}
Let $\Spc X$, $\Spc X_1$, and $\Spc X_2$ be three Hilbert spaces. If the linear operators
$\Op T_1: \Spc X \to \Spc X_1$ and $\Op T_2: \Spc X_1 \to \Spc X_2$ are both bounded (resp, bounded below with constants $A_1,A_2$ or of Parseval type), then the same holds true for
the composed operator $\Op T=\Op T_2 \circ \Op T_1: \Spc X \to \Spc X_1 \to \Spc X_2$ with 
$\|\Op T\|\le \|\Op T_1\|\; \|\Op T_2\|$ (resp., with lower bound $A=A_1A_2$).
\end{proposition}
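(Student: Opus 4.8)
The plan is to treat the three claims (boundedness, boundedness from below, Parseval type) in turn, each time simply chaining the defining inequality of $\Op T_1$ with that of $\Op T_2$ through the intermediate space $\Spc X_1$. Linearity of $\Op T=\Op T_2\circ\Op T_1$ is immediate from the linearity of the factors, so the only content is the propagation of the norm estimates.

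For the boundedness claim, I would fix an arbitrary $x\in\Spc X$ and first apply \eqref{Eq:Boundedup} for $\Op T_2$ to the vector $\Op T_1\{x\}\in\Spc X_1$, then apply \eqref{Eq:Boundedup} for $\Op T_1$ to $x$. This yields the chained estimate
\begin{align*}
\|\Op T\{x\}\|_{\Spc X_2}=\|\Op T_2\{\Op T_1\{x\}\}\|_{\Spc X_2}\le\|\Op T_2\|\,\|\Op T_1\{x\}\|_{\Spc X_1}\le\|\Op T_2\|\,\|\Op T_1\|\,\|x\|_{\Spc X}.
\end{align*}
Taking the supremum over $x$ with $\|x\|_{\Spc X}=1$ and invoking the characterization \eqref{EQ:NormOp} of the operator norm then gives the submultiplicative bound $\|\Op T\|\le\|\Op T_1\|\,\|\Op T_2\|$, in particular $\Op T$ is bounded.

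For the lower bound, the argument is the mirror image: for any $x\in\Spc X$ I would apply \eqref{Eq:Boundedlow} for $\Op T_2$ with constant $A_2$, followed by \eqref{Eq:Boundedlow} for $\Op T_1$ with constant $A_1$, obtaining $\|\Op T\{x\}\|_{\Spc X_2}\ge A_2\,\|\Op T_1\{x\}\|_{\Spc X_1}\ge A_2 A_1\,\|x\|_{\Spc X}$, so $\Op T$ is bounded below with $A=A_1 A_2$. For the Parseval case, the slickest route is through the isometry characterization (Item~1 of Proposition~\ref{Prop:Parseval}): since $\Op T_1$ and $\Op T_2$ are each norm-preserving, one has $\|\Op T\{x\}\|_{\Spc X_2}=\|\Op T_1\{x\}\|_{\Spc X_1}=\|x\|_{\Spc X}$ for every $x$, so $\Op T$ is again an isometry, i.e.\ Parseval; equivalently, this is the special case $A_1=B_1=A_2=B_2=1$ of the two estimates above, which force $A=B=1$.

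There is essentially no obstacle here: the statement is a routine consequence of the triangle-free, purely multiplicative way in which the bounds compose, and the only point demanding a little care is the order of application (apply the outer operator's bound first, to the already-transformed vector) and the verification that the supremum defining $\|\Op T\|$ is taken over the correct normalized set. The Parseval conclusion requires no separate computation once the isometry identity is written out.
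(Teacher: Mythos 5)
Your proof is correct and follows essentially the same approach as the paper: chaining the defining inequality of $\Op T_2$ (applied to $\Op T_1\{x\}\in\Spc X_1$) with that of $\Op T_1$ (applied to $x$). The paper only writes this out for the bounded-below case as an illustration; your proposal simply carries the identical mechanism through all three cases, including the Parseval case via the isometry characterization.
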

For instance, if $\Op T_1$ and $\Op T_2$ are both bounded below, then,
for all $x\in \Spc X$,
\begin{align}
\|\Op T_2 \Op T_1\{x\}\|_{\Spc X_2}\ge A_2 \| \Op T_1\{x\}\|_{\Spc X_1} \ge A_2 A_1
\|x\|_{\Spc X}
\end{align}
with $ \Op T_1\{x\}\in \Spc X_1$.

\section{Vector-Valued LSI Operators on  $\ell^N_2(\Z^d)$}
\label{Sec:VecValued}

In this section, we shall identify and characterize the special class of linear operators that operate on discrete vector-valued signals and commute with the shift operation.

\begin{definition} 
\label{Def:LSI}
A discrete operator $\Op T_{\rm LSI}$ is {\em linear-shift-invariant} (LSI) if it is linear 
and if, for any discrete vector-valued signal $\M x[\cdot]$ in its domain and any $\bk_0\inZ^d$,
$$
\Op T_{\rm LSI}\{\M x[\cdot-\bk_0]\}=\Op T_{\rm LSI}\{\M x\}[\cdot-\bk_0].
$$
\end{definition}
We observe that the LSI property is conserved through linear combinations and composition.
Moreover, we shall see that all $\ell_2$-stable LSI operators acting on discrete vector-valued signals can be identified as (multichannel) convolution operators, as stated in Theorem \ref{Theo:L2VecBound}.

\subsection{Refresher: Scalar Convolution Operators}
To set the context, we first present a classic result on the characterization of scalar LSI operators,
together with a self-contained proof that will serve as model for subsequent derivations.

\begin{theorem}[Kernel theorem for discrete LSI operators on $\ell_2(\Z^d)$]
\label{Theo:GKerneldiscreteLSI}
For any given $h\in \ell_2(\Z^d)$, the operator $\Op T_h: x[\cdot] \mapsto (h \ast x)[\cdot]$
with $x[\cdot]\in \ell_2(\Z^d)$ and
\begin{align}
\label{Eq:ConvD}
(h \ast x)[\bk]\eqdef \langle h, x[\V k-\cdot]\rangle_{\ell_2(\Z^d)}=\sum_{\V m \in \Z^d} h[\V m]\,x[\V k-\V m],\quad \V k \inZ^d
\end{align}
is linear-shift-invariant. Moreover, $\Op T_h$ continuously maps $\ell_2(\Z^d) \to \ell_2(\Z^d)$ if and only if $\|\hat h\|_{L_\infty}=\esssup_{\bw \in [-\pi,+\pi]^d}\big|\hat h(\bw)\big|<\infty$.
Conversely, for every continuous LSI operator $\Op T_{\rm LSI}: \ell_2(\Z^d) \to \ell_2(\Z^d)$, there is one and only one $h
\in \ell_2(\Z^d)$ with $\|\hat h||_{L_\infty}=\|\Op T_{\rm LSI}\|$ such that
$\Op T_{\rm LSI}: x[\cdot] \mapsto (h \ast x)[\cdot]$ where the convolution is specified by
\eqref{Eq:ConvD}.
\end{theorem}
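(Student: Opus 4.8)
The plan is to split the statement into the easy forward direction (the properties of $\Op T_h$ for a fixed kernel $h$) and the harder converse (every bounded LSI operator is a convolution). First I would verify that $\Op T_h$ is LSI: linearity is immediate from \eqref{Eq:ConvD}, and shift-invariance follows from a change of variable in the convolution sum, since
\[
(h \ast x[\cdot - \bk_0])[\V k] = \sum_{\V m \in \Z^d} h[\V m]\, x[\V k - \V m - \bk_0] = (h \ast x)[\V k - \bk_0].
\]
For the boundedness criterion I would pass to the Fourier domain. The convolution theorem $\Fourier_{\rm d}\{h \ast x\} = \hat h \cdot \hat x$ (a direct consequence of \eqref{Eq:Fourierl1} and \eqref{Eq:ConvD}) combined with Plancherel's identity \eqref{Eq:Plancherel} gives
\[
\|\Op T_h\{x\}\|_{\ell_2(\Z^d)}^2 = \int_{\mathbb{T}^d} |\hat h(\bw)|^2\, |\hat x(\bw)|^2 \, \frac{\dint \bw}{(2\pi)^d}.
\]
If $\|\hat h\|_{L_\infty} < \infty$, bounding $|\hat h(\bw)|^2$ by its essential supremum and using $\|\hat x\|_{L_2} = \|x\|_{\ell_2(\Z^d)}$ yields $\|\Op T_h\| \le \|\hat h\|_{L_\infty}$, so $\Op T_h$ maps $\ell_2(\Z^d)$ continuously into itself.

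Next I would establish the matching lower bound $\|\Op T_h\| \ge \|\hat h\|_{L_\infty}$, which simultaneously proves the ``only if'' half of the boundedness criterion. Fixing $\epsilon > 0$, I would consider the set $E_\epsilon = \{\bw \in \mathbb{T}^d : |\hat h(\bw)| > \|\hat h\|_{L_\infty} - \epsilon\}$, which has positive Lebesgue measure by the definition of the essential supremum. Since $h$ is real, $\hat h$ is Hermitian-symmetric and $|\hat h|$ is even, so $E_\epsilon$ can be taken symmetric under $\bw \mapsto -\bw$; the normalized indicator $\hat x = \One_{E_\epsilon}/\|\One_{E_\epsilon}\|_{L_2}$ is then real and even, hence a legitimate Fourier transform of a real signal $x \in \ell_2(\Z^d)$. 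Evaluating the displayed identity on this $\hat x$ gives $\|\Op T_h\{x\}\|_{\ell_2(\Z^d)} \ge \|\hat h\|_{L_\infty} - \epsilon$, and letting $\epsilon \to 0$ yields the claim; the same construction shows that $\Op T_h$ is unbounded whenever $\|\hat h\|_{L_\infty} = \infty$.

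For the converse I would define the impulse response $h \eqdef \Op T_{\rm LSI}\{\delta\}$, which lies in $\ell_2(\Z^d)$ because $\delta \in \ell_2(\Z^d)$ and $\Op T_{\rm LSI}$ is bounded. Writing any $x \in \ell_2(\Z^d)$ as the $\ell_2$-convergent expansion $x = \sum_{\V m} x[\V m]\, \delta[\cdot - \V m]$, I would invoke the continuity of $\Op T_{\rm LSI}$ to interchange the operator with the infinite sum and then shift-invariance to obtain $\Op T_{\rm LSI}\{x\} = \sum_{\V m} x[\V m]\, h[\cdot - \V m] = (h \ast x)[\cdot]$. The identity $\|\hat h\|_{L_\infty} = \|\Op T_{\rm LSI}\|$ then follows from the norm computation above. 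Uniqueness is immediate, since $\Op T_{h_1} = \Op T_{h_2}$ forces $h_1 = \Op T_{h_1}\{\delta\} = \Op T_{h_2}\{\delta\} = h_2$.

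The main obstacle I anticipate is the rigorous justification of the density/continuity step in the converse: one must confirm that the partial sums of $\sum_{\V m} x[\V m]\,\delta[\cdot - \V m]$ converge to $x$ in the $\ell_2$-norm and that boundedness of $\Op T_{\rm LSI}$ legitimately commutes the operator with this limit. A secondary point demanding care is the Hermitian-symmetry constraint in the lower-bound construction, ensuring that the near-extremal test functions correspond to genuine real-valued signals rather than arbitrary elements of $L_2(\mathbb{T}^d)$.
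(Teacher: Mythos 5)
Your proof is correct, and its overall skeleton (Plancherel upper bound, impulse-response identification, uniqueness via $\delta$) matches the paper's; however, the two technically delicate steps are handled by genuinely different arguments. For the sharpness of $\|\Op T_h\| \le \|\hat h\|_{L_\infty}$, you build near-maximizing inputs from normalized indicators of the sets where $|\hat h|$ nearly attains its essential supremum---the classical concentration argument---and you correctly attend to the Hermitian-symmetry constraint so that the test functions are transforms of real signals (a point the paper does not discuss). The paper instead defers this step to the proof of Theorem \ref{Theo:L2VecBound}, where boundedness is tested against all inputs to conclude that $\|\widehat{\M H}\|^2\M I_N - \widehat{\M H}^\HTop(\bw)\widehat{\M H}(\bw)$ is nonnegative-definite almost everywhere; that quadratic-form route is tailored to the matrix-valued case, where your indicator construction would additionally require a measurable selection of maximizing singular vectors. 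For the converse, you set $h = \Op T_{\rm LSI}\{\delta\}$ and use density of finitely supported sequences plus continuity to push the operator through the expansion $x = \sum_{\V m} x[\V m]\,\delta[\cdot - \V m]$; the paper instead defines the functional $x \mapsto \Op T_{\rm LSI}\{x^\vee\}[\V 0]$, observes that it is continuous, and invokes the self-duality $\big(\ell_2(\Z^d)\big)' = \ell_2(\Z^d)$ to produce $h$, after which shift-invariance immediately yields the convolution formula. Your route is more elementary (no Riesz representation needed), at the cost of the limit-interchange justification you flag---which does go through: the partial sums converge to $x$ in $\ell_2$, boundedness transports that convergence to the outputs, and since $\ell_2$-convergence implies pointwise convergence, the limit coincides with the absolutely convergent (Cauchy--Schwarz) pointwise sum defining $(h \ast x)[\bk]$. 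The paper's duality route avoids this bookkeeping and has the advantage of being reused verbatim in the indirect part of the vector-valued kernel theorem.
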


\begin{proof}\item {\em Direct part}. The assumption $(h,x) \in \ell_2(\Z^d)\times \ell_2(\Z^d)$ ensures that
\eqref{Eq:ConvD} is well-defined for any $\V k\inZ^d$.
The shift-invariance is then an obvious consequence of Definition \ref{Def:LSI}, as
\begin{align*}
\Op T_h\{x\}[\V k-\V k_0]&=\langle h,x[\V k-\V k_0-\cdot]\rangle_{\ell_2} \\
&=\langle h,x[(\V k-\cdot)-\V k_0]\rangle_{\ell_2}=\Op T_h\{x[\cdot-\V k_0]\}[\V k].
\end{align*}
By observing that the Fourier transform of $x[\V k-\cdot] \in \ell_2(\Z^d)$  is
$\overline{\hat x(\bw) \ee^{\jj \langle \bw, \bk\rangle}}$, we then invoke Plancherel's identity \eqref{Eq:Plancherel} to show that
 \begin{align*}
(h \ast x)[\bk]=\langle h,x[\V k-\cdot]\rangle_{\ell_2}=\int_{\mathbb{T}^d} \hat h(\bw) \hat x(\bw) \ee^{\jj \langle \bw, \bk\rangle} \;\frac{\dint \bw}{(2 \pi)^d}=\Fourier_{\rm d}^{-1}\left\{ \hat h \times \hat x\right\}[\bk],
 \end{align*}
where the identification of the inverse Fourier operator is legitimate since the boundedness of $\hat h$ implies that $\hat h \times \hat x \in L_2(\mathbb{T}^d)$.
Consequently, we are in the position where we can invoke Parseval's relation
\begin{align*}
\|h \ast x\|^2_{\ell_2}=\int_{\mathbb{T}^d} |\hat h(\bw)|^2 |\hat x(\bw)|^2 \;\frac{\dint \bw}{(2 \pi)^d}\le \|\hat h\|^2_{L_\infty}\int_{\mathbb{T}^d} |\hat x(\bw)|^2 \;\frac{\dint \bw}{(2 \pi)^d}.
\end{align*}
This yields the stability bound $\|h \ast x\|_{\ell_2}\le \|\hat h\|_{L_\infty} \|x\|_{\ell_2}$,
which implies the continuity of $\Op T_h: \ell_2(\Z^d) \to \ell_2(\Z^d)$. To show that the latter bound is sharp (``if and only if'' part of the statement), we refer to the central, more technical part of the proof of Theorem \ref{Theo:L2VecBound}. 
  
\item {\em Indirect Part}. We define the linear functional
$h: x \mapsto \langle h, x\rangle \eqdef  \Op T_{\rm LSI}\{x^\vee\}[\V 0]$. 
The continuity of $\Op T_{\rm LSI}: \ell_2(\Z^d) \to \ell_2(\Z^d)$ implies that $\Op T_{\rm LSI}\{x^\vee_i\}[\V 0]=\langle h,x_i\rangle\to 0$
for any sequence of signals $(x_i)_{i\in \N}$ in $\ell_2(\Z^d)$ that converges to 0 (or, equivalently, $x_i^\vee\to 0$). This ensures that the functional $h: x \mapsto \langle h, x\rangle$ is continuous on $\ell_2(\Z^d)$, meaning that $h \in \big(\ell_2(\Z^d)\big)'=\ell_2(\Z^d)$, which 
allows us to write that $\langle h, x\rangle=\langle h, x\rangle_{\ell_2}$. We then make use of the shift-invariance property to show that
\begin{align*}
\Op T_{\rm LSI}\{x\}[\V k]&=\Op T_{\rm LSI}\{x[\cdot+\V k]\}(\V 0)= \langle h, x[\cdot+\V k]^\vee\rangle_{\ell_2}=\langle h, x[\V k-\cdot]\rangle_{\ell_2}
\end{align*}
for any $\V k\inZ^d$, from which we also deduce that $\Op T_{\rm LSI}\{\delta\}=h$.
\end{proof}

Theorem \ref{Theo:GKerneldiscreteLSI} tells us that an LSI operator
 $\Op T_{\rm LSI}: \ell_2(\Z^d) \to \ell_2(\Z^d)$ can always be implemented as a discrete convolution with its impulse response $h=\Op T_{\rm LSI}\{\delta[\cdot]\}$.
 It also provides the Lipschitz constant of the operator, as
${\rm Lip}(\Op T_{\rm LSI})=\|\hat  h\|_{\infty}$ (supremum of its frequency response).
(We recall that, for a linear operator, the Lipschitz constant is precisely the norm of the operator.) We also note that the classic condition for stability from linear-systems theory, $h \in \ell_1(\Z^d)$, is sufficient to ensure the continuity of the operator because
$|\hat h(\bw)|\le \|h\|_{\ell_1}$. However, the latter condition is not necessary; for instance, the $\ell_2$-Lipschitz constant of an ideal lowpass pass filter is $1$ by design, while its (sinc-like)
impulse response is not included in $\ell_1(\Z^d)$.
\subsection{Multichannel Convolution Operators}
We now show that the concept carries over to vector-valued signals. To that end, we consider
a generic multichannel convolution operator that acts on an $N$-channel input signal $\M x[\cdot]\in \ell_2^N(\Z^d)$ and returns an
$M$-channel output $\M y[\cdot]$. Such an operator is characterized through
its matrix-valued impulse response $\M H[\cdot]$ with $\big[\M H[\cdot]\big]_{m,n}=h_{m,n}[\cdot]\in \ell_2(\Z^d)$ and $\M H[\bk]\inR^{M \times N}$ for any $\bk \inZ^d$. From now on, we shall denote such a convolution operator by $\Op T_{\M H}$ and refer to it as a multichannel filter. 

To benefit from the tools and theory developed for the scalar case, it is useful to express the multichannel convolution as the matrix-vector combination of a series of component-wise scalar convolutions $(h_{m,n}\ast x_n)[\cdot]$ with $(m,n) \in \{1,\dots,M\}\times \{1,\dots,N\}$.
This is written as
\begin{align}
\Op T_{\M H} 
: \M x[\cdot]\mapsto (\M H \ast \M x) [\bk] \eqdef 
\left[
\begin{array}{c}
\sum_{n=1}^N (h_{1,n} \ast x_n)[\bk]   \\
  \vdots     \\
\sum_{n=1}^N (h_{M,n} \ast x_n)[\bk] 
\end{array}
\right],
\label{Eq:ConvVec}
\end{align}
where \begin{align}
\M H[\cdot]=\left[
\begin{array}{ccc}
h_{1,1}[\cdot] & \cdots & h_{1,N}[\cdot]\\
\vdots &\ddots &\vdots \\
h_{M,1}[\cdot] & \cdots & h_{M,N}[\cdot]\end{array}
\right]=
\left[\M h_1[\cdot] \ \cdots \ \M h_N[\cdot]\right]
\end{align}
with the $n$th column of the impulse response being identified as
$$\M h_n[\cdot]=\left[\begin{array}{c}
h_{1,n}[\cdot]   \\
  \vdots      \\
h_{M,n}[\cdot] 
\end{array}\right]
=\Op T_{\M H}\{\M e_n\delta[\cdot]\}.$$
We also note that the convolution in \eqref{Eq:ConvVec} has an explicit representation, given by \eqref{Eq:ConvmultiD}, which is the matrix-vector counterpart of the scalar formula \eqref{Eq:ConvD}.

As in the scalar scenario, the multichannel convolution can be implemented by a multiplication in the Fourier domain, with the frequency response of the filter now having the form of a matrix.
Specifically, for any $\M x[\cdot]\in \ell_2^N(\Z^d)$ with vector-valued Fourier transform
$\widehat{\M x}=\Fourier_{\rm d}\{\M x[\cdot]\} \in L_2^N(\mathbb{T}^d)$, we have that
%
\begin{align}
\Fourier_d\big\{ (\M H \ast \M x)[\cdot]\big\}(\bw)&=\widehat{\M H}(\bw) \, \widehat{\M x}(\bw)=\left[\widehat{\M h}_1(\bw) \ \cdots \ \widehat{\M h}_N(\bw)\right]\, \widehat{\M x}(\bw)
 \nonumber \\
&=
\left[
\begin{array}{ccc}
\hat h_{1,1}(\bw) & \cdots & \hat h_{1,N}(\bw)\\
\vdots&\ddots&\vdots \\
\hat h_{M,1}(\bw) & \cdots & \hat h_{M,N}(\bw)\end{array}
\right]
\, \left[
\begin{array}{c}
\hat x_1(\bw)\\
\vdots\\
\hat x_N(\bw)
\end{array}
\right],\end{align}
where the matrix-valued function $\widehat{\M H}: [-\pi,\pi]^d \to \C^{N \times M}$, with
$[\widehat{\M H}]_{m,n}=\hat h_{m,n}=\Fourier_{\rm d}\{h_{m,n}\}$, is the component-by-component Fourier transform of
the matrix filter $\M H[\cdot]$.


\subsection{Kernel Theorem for Multichannel LSI Operators}
The matrix-vector convolution specified by 
\eqref{Eq:ConvVec} 
is well-defined for any $\M x[\cdot
] \in \ell_2^N(\Z^d)$
under the assumption that $\M H[\cdot] \in \ell_2(\Z^d)^{M \times N}$. Yet, we need to be a bit more selective to ensure that the operator is (Lipschitz-) continuous with respect to the $\ell_2$-norm. We show in Theorem \ref{Theo:L2VecBound}
that there is an equivalence between continuous multi-channel LSI operators and bounded multichannel filters (convolution operators), while we also give an explicit formula for the norm of the operator. As one may expect, the Schwartz kernel of the LSI operator is the matrix-valued impulse response of the multichannel filter.

\begin{theorem} [Kernel theorem for LSI operators $\ell^N_2(\Z^d)\to \ell^M_2(\Z^d)$]
\label{Theo:L2VecBound}
For any given $\M H[\cdot] \in \ell_2(\Z^d)^{M \times N}$, the convolution operator $\Op T_{\M H}: \M x[\cdot] \mapsto (\M H \ast \M x)[\cdot]$
with $N$-vector-valued input $\M x[\cdot]\in \ell^N_2(\Z^d)$ and $M$-vector-valued output
\begin{align}
\label{Eq:ConvmultiD}
(\M H \ast \M x) [\bk] =\sum_{\V \ell \in \Z^d} \M H[\V \ell]\M x[\V k-\V \ell],\quad \V k \inZ^d
\end{align}
is linear-shift-invariant and characterized by its matrix-valued frequency response $\Fourier_{\rm d}\{\M H[\cdot]\}=\widehat{\M H}(\cdot) \in
L_2(\mathbb{T}^d)^{M \times N}$.
%
Moreover, $\Op T_{\M H}$ continuously maps
$\ell^N_2(\Z^d) \to \ell^M_2(\Z^d)$ if and only if 
\begin{align}
\label{Eq:LH}
\|\Op T_{\M H}\|=\sigma_{\sup,\M H}
=\esssup_{\bw \in [-\pi,\pi]^d} \sigma_{\max}\big(\widehat{\M H}(\bw)\big)
<\infty,
\end{align}
where $\sigma_{\max}\big(\widehat{\M H}(\bw)\big)$ with $\bw$ fixed is the maximal singular value of the matrix
$\widehat{\M H}(\bw) \in \C^{M \times N}$.

Conversely, for every continuous LSI operator $\Op T_{\rm LSI}: \ell_2^N(\Z^d) \toC \ell^M_2(\Z^d)$, there is one and only one $\M H[\cdot]
 \in \ell_2(\Z^d)^{M \times N}$ (the matrix-valued impulse response of $\Op T_{\rm LSI}$) such that
$\Op T_{\rm LSI}=\Op T_{\M H}: \M x[\cdot] \mapsto (\M H \ast \M x)[\cdot]$
and $\|\Op T_{\rm LSI}\|_{\ell^N_2 \to \ell^M_2}=\sigma_{\sup,\M H} < \infty$.

\end{theorem}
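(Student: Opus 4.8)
The plan is to lift the scalar argument of Theorem~\ref{Theo:GKerneldiscreteLSI} to the vector-valued setting component-by-component, the one genuinely new ingredient being that the role played by $|\hat h(\bw)|$ in the scalar case is now taken over by the largest singular value $\sigma_{\max}\big(\widehat{\M H}(\bw)\big)$. \textbf{Direct part.} Linearity and shift-invariance are inherited immediately, since each output component is $\sum_{n}(h_{m,n}\ast x_n)[\cdot]$ and the pointwise value in \eqref{Eq:ConvmultiD} is well-defined by Cauchy--Schwarz ($\M H[\cdot]\in\ell_2(\Z^d)^{M\times N}$ and $\M x[\cdot]\in\ell_2^N(\Z^d)$). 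Taking Fourier transforms entrywise and invoking the scalar convolution theorem yields the diagonalization $\Fourier_{\rm d}\{\M H\ast\M x\}(\bw)=\widehat{\M H}(\bw)\,\widehat{\M x}(\bw)$, a priori only an identity in $L_1^M(\mathbb{T}^d)$ (products of two $L_2$ functions). Assuming $\sigma_{\sup,\M H}<\infty$, the matrix bound $\|\M A\M v\|_2\le\sigma_{\max}(\M A)\,\|\M v\|_2$ gives the pointwise estimate $\|\widehat{\M H}(\bw)\widehat{\M x}(\bw)\|_2\le\sigma_{\sup,\M H}\|\widehat{\M x}(\bw)\|_2$, which upon integration shows $\widehat{\M H}\widehat{\M x}\in L_2^M(\mathbb{T}^d)$. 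Hence $\M H\ast\M x=\Fourier_{\rm d}^{-1}\{\widehat{\M H}\widehat{\M x}\}\in\ell_2^M(\Z^d)$ and, by the vector Plancherel identity \eqref{Eq:L2innerprod}, $\|\M H\ast\M x\|_{\ell^M_2}=\|\widehat{\M H}\widehat{\M x}\|_{L_2^M}\le\sigma_{\sup,\M H}\|\M x\|_{\ell^N_2}$, proving continuity and $\|\Op T_{\M H}\|\le\sigma_{\sup,\M H}$.

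\textbf{Sharpness (the main obstacle).} To get the reverse inequality $\|\Op T_{\M H}\|\ge\sigma_{\sup,\M H}$ I would build near-extremal inputs. Fix $\epsilon>0$ and let $\Omega_\epsilon=\{\bw:\sigma_{\max}(\widehat{\M H}(\bw))>\sigma_{\sup,\M H}-\epsilon\}$, which has positive measure by the definition of the essential supremum. For each $\bw$ pick a unit right-singular vector $\M v(\bw)\in\C^N$ of $\widehat{\M H}(\bw)$ associated with $\sigma_{\max}$; the delicate point is that $\bw\mapsto\M v(\bw)$ must be \emph{measurable}, which I would secure by a standard measurable-selection argument applied to the measurable Hermitian field $\bw\mapsto\widehat{\M H}(\bw)^\HTop\widehat{\M H}(\bw)$ (its top spectral projector is measurable, and one selects a measurable unit vector in its range). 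I would then set $\widehat{\M x}(\bw)=g(\bw)\,\M v(\bw)$ with a scalar $g$ supported in $\Omega_\epsilon$, extended to $-\Omega_\epsilon$ by Hermitian symmetry; here one uses that $\widehat{\M H}(-\bw)=\overline{\widehat{\M H}(\bw)}$ forces $\overline{\M v(\bw)}$ to be a top singular vector at $-\bw$, so the prescription $\widehat{\M x}(-\bw)=\overline{\widehat{\M x}(\bw)}$ is consistent and produces a genuine real-valued $\M x\in\ell_2^N(\Z^d)$. By construction the integrand equals $\sigma_{\max}(\widehat{\M H}(\bw))^2|g(\bw)|^2$ pointwise, so $\|\M H\ast\M x\|_{\ell^M_2}\ge(\sigma_{\sup,\M H}-\epsilon)\|\M x\|_{\ell^N_2}$; letting $\epsilon\to0$ gives \eqref{Eq:LH}. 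The same construction, run with $\sigma_{\sup,\M H}=\infty$, exhibits inputs of arbitrarily large gain and thereby settles the ``only if'' direction of continuity.

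\textbf{Converse (indirect part).} Given a continuous LSI operator $\Op T_{\rm LSI}:\ell_2^N(\Z^d)\to\ell_2^M(\Z^d)$, I would reduce to the scalar theorem one channel pair at a time. For each $(m,n)$ define the scalar operator $\Op T_{m,n}:x\mapsto\big[\Op T_{\rm LSI}\{\M e_n x\}\big]_m$, which is linear, shift-invariant, and continuous (all inherited), so Theorem~\ref{Theo:GKerneldiscreteLSI} supplies a unique $h_{m,n}\in\ell_2(\Z^d)$ with $\Op T_{m,n}=\Op T_{h_{m,n}}$; equivalently, the columns are read off as $\M h_n[\cdot]=\Op T_{\rm LSI}\{\M e_n\delta[\cdot]\}$. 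Assembling $\M H[\cdot]=[h_{m,n}[\cdot]]\in\ell_2(\Z^d)^{M\times N}$ and decomposing a general input as $\M x=\sum_n\M e_n x_n$, linearity gives $\big[\Op T_{\rm LSI}\{\M x\}\big]_m=\sum_n(h_{m,n}\ast x_n)=\big[\M H\ast\M x\big]_m$, i.e. $\Op T_{\rm LSI}=\Op T_{\M H}$, with uniqueness of $\M H$ inherited from the scalar uniqueness. Finally, applying the already-proven direct part to this $\M H$ yields $\|\Op T_{\rm LSI}\|=\|\Op T_{\M H}\|=\sigma_{\sup,\M H}$, which is finite exactly because $\Op T_{\rm LSI}$ is continuous; there is no circularity, as the direct part was established for an arbitrary $\M H\in\ell_2(\Z^d)^{M\times N}$. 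I expect only the measurable selection of top singular vectors, together with the Hermitian-symmetry bookkeeping that keeps the extremal signal real, to require genuine care; the remainder is a routine entrywise lift of the scalar proof.
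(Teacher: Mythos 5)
Your proof is correct, and its direct part coincides with the paper's; the two delicate steps, however, are handled by genuinely different routes. For the sharpness $\|\Op T_{\M H}\|\ge\sigma_{\sup,\M H}$, you build near-extremal inputs by measurably selecting a field of top right-singular vectors $\bw\mapsto\M v(\bw)$ on $\Omega_\epsilon$ and restoring Hermitian symmetry by hand, whereas the paper avoids measurable selection altogether through a duality argument: boundedness gives $\int_{\mathbb{T}^d}\widehat{\M x}^\HTop(\bw)\big(\|\widehat{\M H}\|^2\M I_N-\widehat{\M H}^\HTop(\bw)\widehat{\M H}(\bw)\big)\widehat{\M x}(\bw)\,\frac{\dint\bw}{(2\pi)^d}\ge 0$ for all test functions, from which the matrix in parentheses is nonnegative-definite almost everywhere and hence $\sigma_{\sup,\M H}\le\|\Op T_{\M H}\|$. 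Your constructive route carries a heavier technical load (the selection can be secured, e.g., via Kuratowski--Ryll-Nardzewski, or explicitly by noting that $\M P_{\max}=\lim_{p\to\infty}\big(\M B/\lambda_{\max}\big)^p$ with $\M B=\widehat{\M H}^\HTop\widehat{\M H}$ is a measurable projector field whose first nonvanishing normalized column serves as $\M v$), but it buys explicit near-maximizers and settles the unbounded case $\sigma_{\sup,\M H}=\infty$ directly---with the small repair that there you should intersect $\{\sigma_{\max}>R\}$ with $\{\sigma_{\max}\le R'\}$ so that the chosen input actually has an $\ell_2^M$ output of gain exceeding $R$; conversely, the paper leaves implicit its own localization step (integrated positivity over a dense family implies pointwise a.e.\ matrix nonnegativity, which requires running the argument over a countable dense set of directions $\M u\in\C^N$), so the net rigor burden is comparable. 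For the converse, you reduce entrywise to Theorem~\ref{Theo:GKerneldiscreteLSI} via the scalar operators $\Op T_{m,n}:x\mapsto\big[\Op T_{\rm LSI}\{\M e_n x\}\big]_m$ and assemble $\M H$ with uniqueness inherited from the scalar case, while the paper replays the Riesz-representation argument vectorially, defining the row functionals $\M g_m:\M x\mapsto\big[\Op T_{\rm LSI}\{\M x^\vee\}[\V 0]\big]_m\in\big(\ell_2^N(\Z^d)\big)'$ and invoking shift-invariance; your reduction is the more economical of the two, as it reuses the scalar theorem wholesale. One bookkeeping remark on your symmetry step: since $\M H$ is real, $\sigma_{\max}\big(\widehat{\M H}(-\bw)\big)=\sigma_{\max}\big(\overline{\widehat{\M H}(\bw)}\big)=\sigma_{\max}\big(\widehat{\M H}(\bw)\big)$, so $\Omega_\epsilon$ is itself symmetric under $\bw\mapsto-\bw$; you should therefore take $g$ supported on the intersection of $\Omega_\epsilon$ with a half-torus before extending by conjugation, rather than ``supported in $\Omega_\epsilon$ and extended to $-\Omega_\epsilon$,'' which as phrased would double-prescribe values.
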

\begin{proof} 
\item {\em Direct Part}.  The $m$th entry of $(\M H \ast \M x) [\bk]$
can be identified as $\big[(\M H \ast \M x) [\bk]\big]_m=\langle \M g_m,\M x[\V k - \cdot] \rangle_{\ell_2^N(\Z^d)}$ with 
$\M g_m[\cdot]=(h_{m,1}[\cdot],\dots,h_{m,N}[\cdot]) \in \ell_2^N(\Z^d)$ being the $m$th row of the matrix-valued impulse response 
$\M H[\cdot]$. 
The LSI property (see Definition \ref{Def:LSI}) then 
follows from the observation that
\begin{align*}
\big[(\M H \ast \M x) [\bk-\bk_0]\big]_m&=\langle \M g_m,\M x[(\V k -\V k_0)-\cdot)] \rangle_{\ell_2^N(\Z^d)}\\
&=\langle \M g_m,\M x[(\V k -\cdot)-\V k_0] \rangle_{\ell_2^N(\Z^d)}=\big[(\M H \ast \M x[\cdot-\bk_0]) [\bk]\big]_m,
\end{align*}
for $m=1,\dots,M$ and any $\V k, \V k_0 \in \Z^d$.

The Fourier-domain equivalent of the hypothesis $\M x \in \ell^N_2(\Z^d)$
(resp. $\M H[\cdot] \in \ell_2(\Z^d)^{M \times N}$) is $\widehat{\M x} \in L^N_2(\mathbb{T}^d)$
(resp., $\widehat{\M H}(\cdot) \in L_2(\mathbb{T}^d)^{M \times N}$). The key for this equivalence is the vector-valued version of Parseval's identity given by
\begin{align*}
\|\M x\|^2_{\ell^N_2(\Z^d)}=\sum_{n=1}^N \|x_n\|_{\ell_2}^2&= \sum_{n=1}^N  \int_{\mathbb{T}^d}  |\hat x_n(\bw)|^2 \frac{\dint \bw}{(2 \pi)^d}
= \int_{\mathbb{T}^d} 
\|\widehat{\M x}(\bw)\|_2^2 \frac{\dint \bw}{(2 \pi)^d}= \|\widehat{\M x}\|^2_{L^N_2(\mathbb{T}^d)}.
\end{align*}
Likewise, under the assumption that $\widehat{\M H}(\cdot)\widehat{\M x}(\cdot) \in L_2^M(\mathbb{T}^d)$, we can evaluate the $\ell_2$-norm of the convolved signal as
\begin{align}
\label{Eq:FourierConvNorm}
\|\M H \ast \M x\|_{\ell^M_2(\Z^d)}&= 
\left(\int_{\mathbb{T}^d} \|\widehat{\M H}(\bw) \widehat{\M x}(\bw)\|_2^2 \frac{\dint \bw}{(2 \pi)^d}\right)^\frac{1}{2}=\|\widehat{\M H}\widehat{\M x}\|_{L^M_2(\mathbb{T}^d)},
\end{align}
where we are relying on the property that the convolution corresponds to
a pointwise multiplication in the Fourier domain.\\[-1ex]

\item {\em Norm of the Operator
}.
Implicit in the specification of $\sigma_{\sup,\M H}$ in \eqref{Eq:LH} is the requirement that the matrix-valued frequency response
$\widehat{\M H}(\cdot): \mathbb{T}^d \to \C^{M \times N}$ 
be measurable and bounded almost everywhere.
This means that $\widehat{\M H}(\bw)$ with $\bw$ fixed is a well-defined matrix in $\C^{M \times N}$ for almost any $\bw \in \mathbb{T}^d$. In that case, we can specify its maximal 
singular values by
\begin{align*}
\sigma_{\max}\big(\widehat{\M H}(\bw)\big)=\sup_{\M u \in \C^N \backslash \{ \V 0\}} \frac{\|\widehat{\M H}(\bw) \M u\|_2}{\ \|\M u\|_2}
\end{align*}
Consequently, for any $\widehat{\M x}(\cdot) \in L^N_2(\mathbb{T}^d)$, we have that
\begin{align*}
\|\widehat{\M H}(\bw) \widehat{\M x}(\bw)\|_2 \le 
\|\widehat{\M x}(\bw)\|_2 \cdot \sigma_{\max}\big(\widehat{\M H}(\bw)\big) \le
\|\widehat{\M x}(\bw)\|_2 \cdot \sigma_{\sup,\M H} 
\end{align*}
for almost any $\bw \in \mathbb{T}^d$. This implies that
\begin{align}
\label{Eq:BoundedMatConv}
\|\widehat{\M H}\widehat{\M x}\|_{L^M_2(\mathbb{T}^d)}& =\left(\int_{\mathbb{T}^d} \| \widehat{\M H}(\bw) \widehat{\M x}(\bw)\|_2^2 \frac{\dint \bw}{(2 \pi)^d}\right)^\frac{1}{2}\nonumber\\&
\le  \sigma_{\sup,\M H}  \left(\int_{\mathbb{T}^d}\|\widehat{\M x}(\bw)\|_2^2  \frac{\dint \bw}{(2 \pi)^d}\right)^\frac{1}{2}= \sigma_{\sup,\M H} \cdot  \|\widehat{\M x}\|_{L^N_2(\mathbb{T}^d)}
\end{align}
which, due to the Fourier isometry, yields the upper bound $\|\Op T_{\M H}\|\le \sigma_{\sup,\M H}$.

Likewise, \eqref{Eq:FourierConvNorm} implies that $\|\Op T_{\M H}\|=\|\widehat{\M H}\|$, 
which is the norm of the pointwise multiplication operator $\widehat{\M x} \mapsto
\widehat{\M H}\widehat{\M x}$ and is equal to $\|\sigma_{\max}\big(\widehat{\M H}(\cdot)\big)\|_{L_\infty}$.
Indeed, for any $\widehat{\M x}(\cdot) \in \Spc S(\mathbb{T}^d)^N \subset L_2^N(\mathbb{T}^d)$, the boundedness of $\widehat{\M H}: L_2^N(\mathbb{T}^d) \to L_2^M(\mathbb{T}^d)$ 
implies that
\begin{align*} 
 \int_{[-\pi,\pi]^d} \widehat{\M x}^\HTop(\bw) \widehat{\M H}^\HTop(\bw) \widehat{\M H}(\bw) \widehat{\M x}(\bw) \frac{\dint \bw}{(2 \pi)^d} \le \|\widehat{\M H}\|^2 \int_{[-\pi,\pi]^d} \|\widehat{\M x}(\bw)\|_2^2 \frac{\dint \bw}{(2 \pi)^d},
\end{align*}
which is equivalent to
\begin{align*}
 \int_{[-\pi,\pi]^d} \widehat{\M x}^\HTop(\bw) \left(\|\widehat{\M H}\|^2 \M I_N-\widehat{\M H}^\HTop(\bw) \widehat{\M H}(\bw)\right) \widehat{\M x}(\bw) \frac{\dint \bw}{(2 \pi)^d} \ge 0.\end{align*}
This relation
implies that
the Hermitian-symmetric matrix $\left(\|\widehat{\M H}\|^2  \M I_N-\widehat{\M H}^\HTop(\bw) \widehat{\M H}(\bw)\right)$
is nonnegative-definite for almost
any $\bw \in \mathbb{T}^d$. On the side of the eigenvalues, this translates into
$$\|\widehat{\M H}\|^2-\lambda_{\max}\left(\widehat{\M H}^\HTop(\bw) \widehat{\M H}(\bw)\right)=\|\widehat{\M H}\|^2-\sigma^2_{\max}\left(\widehat{\M H}(\bw)\right)
\ge 0 \  \ a.e.$$ leading to
$\|\sigma_{\max}\left(\widehat{\M H}(\cdot)\right)\|_{L_\infty}=\sigma_{\sup,\M H}\le \|\widehat{\M H}\|=\|\Op T_{\M H}\|$. Since we already know that $\|\Op T_{\M H}\| \le \sigma_{\sup,\M H}$, we deduce that $\|\Op T_{\M H}\|= \sigma_{\sup,\M H}$.
\\[-1ex]

\item {\em Indirect Part}. We define the linear functionals
$\M g_m: \M x[\cdot] \mapsto \langle \M g_m, \M x\rangle =  \Big[\Op T_{\rm LSI}\{\M x^\vee\}[\V 0]\Big]_m$ with $m\in\{1,\dots,M\}$.
The continuity of $\Op T_{\rm LSI}: \ell^N_2(\Z^d) \to \ell^M_2(\Z^d)$ implies that $\Big[\Op T_{\rm LSI}\{\M x^\vee_i\}[\V 0]\Big]_m=\langle \M g_m,\M x_i\rangle\to 0$
for any converging sequence $\M x_i[\cdot] \to \V 0$ (or, equivalently, $\M x_i^\vee[\cdot]\to \M 0$) in $\ell^N_2(\Z^d)$. This ensures that the functional $\M g_m: \M x \mapsto \langle \M g_m, \M x\rangle$ is continuous on $\ell^N_2(\Z^d)$, which is equivalent to $\M g_m=(g_{n,m}[\V k])_{(n,\V k) \in \{1,\dots,N\}\times \Z^d} \in \big(\ell^N_2(\Z^d)\big)'=\ell^N_2(\Z^d)$. This then allows us to write $\langle \M g_m, \M x\rangle=\langle \M g_m, \M x\rangle_{\ell^N_2}$ for all $m\in\{1,\dots,M\}$. We then make use of the shift-invariance property to show that
\begin{align*}
\Op T_{\rm LSI}\{\M x\}[\V k]=\Op T_{\rm LSI}\{\M x[\cdot+\V k]\}[\V 0]&=
\begin{pmatrix}\langle \M g_1, \M x^\vee[\cdot+\V k]\rangle_{\ell^N_2}\\
\vdots\\
 \langle \M g_M, \M x^\vee[\cdot+\V k]\rangle_{\ell^N_2}\end{pmatrix} \\
 &=\begin{pmatrix} \sum_{n=1}^N \sum_{\V m \in\Z^d} g_{n,1}[\V m] x_n[\V k-\V m]\\
\vdots\\
\sum_{n=1}^N \sum_{\V m \in\Z^d} g_{n,M}[\V m] x_n[\V k-\V m]\end{pmatrix}\\
&=\begin{pmatrix} \sum_{n=1}^N (g_{n,1}\ast  x_n)[\V k]\\
\vdots\\
\sum_{n=1}^N (g_{n,M}\ast  x_n)[\V k]
\end{pmatrix}
=(\M H \ast \M x) [\bk]
\end{align*}
for any $\V k\inZ^d$, from which we deduce that $\Op T_{\rm LSI}=\Op T_{\M H}$ with
matrix-valued impulse response $\M H[\cdot]$ 
whose entries are $h_{m,n}[\V k]=\big[\M g_m[\V k]\big]_n=g_{n,m}[\V k]$ with $g_{n,m}[\cdot] \in \ell_2(\Z^d)$.
\end{proof}

An immediate consequence is that the composition of the two continuous LSI operators $\Op T_{\M H_1}: \ell_2^N(\Z^d) \to \ell_2^{N_2}(\Z^d)$
and $\Op T_{\M H_2}: \ell_2^{N_2}(\Z^d) \to \ell_2^M(\Z^d)$ yields a stable multi-filter $\Op T_{\M H}=\Op T_{\M H_2 \ast \M H_1}: \ell_2^N(\Z^d) \to \ell_2^{M}(\Z^d)$ with $\|\Op T_{\M H}\|\le\sigma_{\sup,\M H_1}\,\sigma_{\sup,\M H_2}$.
The frequency response of the composed filter is the product $\widehat{ \M H}(\bw)=\widehat{ \M H}_1(\bw)\widehat{ \M H}_2(\bw)$ of the individual responses, as expected. On the side of the impulse response, this translates into the matrix-to-matrix convolution 
\begin{align}
\label{Eq:LSICompose}
(\M H_2 \ast \M H_1) [\bk] \eqdef \sum_{\V m\in \Z^d} \M H_2[\V m]\M H_1[\V k-\V m],\quad \V k \inZ^d,
\end{align}
which is the matrix counterpart of \eqref{Eq:ConvD}. Beside the fact that the inner dimension $(N_2$) of the matrices must  match, an important difference with the scalar setting is that matrix convolutions are generally not commutative.

\subsection{Parseval Filterbanks}
We now proceed with the characterization of the complete family of Parseval LSI operators from $\ell^N_2(\Z^d)\to \ell^M_2(\Z^d)$.
We know from Theorem \ref{Theo:L2VecBound} that these are necessarily filterbanks of the form $\Op T_{\M H}: \M x[\cdot] \mapsto 
(\M H \ast \M x)[\cdot]$, which can also be specified by their
matrix-valued frequency response $\widehat{\M H}(\cdot)$. Moreover, Proposition \ref{Prop:Parseval}
tells us that the Parseval condition is equivalent to $\Op T_{\M H}^\ast\circ \Op T_{\M H}=\Identity$.

Consequently, the only remaining part is to identify the adjoint operator $\Op T_{\M H}^\ast:\ell^M_2(\Z^d)\to \ell^N_2(\Z^d)$, which is done through the manipulation
\begin{align*}
\forall (\M x, \M y) \in \ell^N_2(\Z^d)\times \ell^M_2(\Z^d):\hspace*{7cm}\nonumber\\
\langle \M y , (\M H \ast \M x)[\cdot]\rangle_{\ell_2^M(\Z^d)}=\langle \widehat{\M y} ,\widehat{ \M H} \widehat{\M x}\rangle_{L_2^M(\mathbb{T}^d)}=\langle \widehat{ \M H}^\HTop\widehat{\M y} , \widehat{\M x}\rangle_{L_2^N(\mathbb{T}^d)}=\langle (\M H^{\Tr\vee} \ast \M y)[\cdot] , \M x)\rangle_{\ell_2^N(\Z^d)},
\end{align*}
where we used the Fourier-Plancherel isometry, a pointwise Hermitian transposition to move the frequency-response matrix on the other side of the inner product, and the property that a complex conjugation of the frequency response translates into the flipping of the impulse response. Based on \eqref{Eq:AdjointdDef}, we can then identify $\Op T^\ast_{\M H}: \M y \mapsto (\M H^{\Tr\vee} \ast \M y)[\cdot]$. This shows that the adjoint of $\Op T_\M H$ is  
 the convolution operator whose matrix impulse response is
$\M H^{\Tr\vee}[\cdot]$ (the flipped and transposed version of $\M H[\cdot]$) and whose frequency response is $\Fourier_{\rm d}\{\M H^{\Tr\vee}[\cdot]\}=\widehat{\M H}^\HTop(\cdot)$.

\begin{proposition}[Characterization of Parseval-LSI operators] 
\label{Theo:PasevalFilters}
A
linear operator $\Op T :\ell^N_2(\Z^d)\to \ell^M_2(\Z^d)$ with $M\ge N$ is
LSI and energy-preserving (Parseval) if and only if it can be represented as a multichannel filterbank
$\Op T=\Op T_{\M H}: \M x[\cdot] \mapsto (\M H \ast \M x)[\cdot]$ whose matrix-valued  impulse response  $\M H[\bk]\in \R^{M \times N}$ with $\bk$ ranging over $\Z^d$ has any of the following equivalent properties.
\begin{enumerate}
\item Invertibility by flip-transposition: $$(\M H^{\Tr\vee} \ast \M H)[\cdot]= \M I_N \delta[\cdot],$$ which is equivalent to $\Op T^\ast_{\M H}\circ \Op T_{\M H}=\Identity$ on $\ell^N_2(\Z^d)$.
\item Paraunitary frequency response: $$\widehat{\M H}^\HTop(\bw) \widehat{\M H}(\bw)=\M I_N \mbox{ for all }\bw \in \mathbb{T}^d,$$ where
$\widehat{\M H}=\Fourier_{\rm d}\{\M H[\cdot]\}\in L_2(\mathbb{T}^d)^{M \times N}$ is the discrete Fourier transform of $\M H[\cdot]$.
\item Preservation of inner products:
$$\forall \M x, \M y \in \ell^N_2(\Z^d):\quad
\langle \M x, \M y \rangle_{\ell^N_2(\Z^d)}=\langle(\M H \ast \M x)[\cdot], (\M H \ast \M y)[\cdot]\} \rangle_{\ell^M_2(\Z^d)}.
$$
\end{enumerate}
\end{proposition}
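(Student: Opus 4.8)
The plan is to route everything through the two structural results already established: the kernel theorem (Theorem \ref{Theo:L2VecBound}), which identifies every continuous LSI operator with a unique matrix convolution, and the abstract characterization of Parseval operators (Proposition \ref{Prop:Parseval}), together with the explicit adjoint $\Op T^\ast_{\M H}: \M y \mapsto (\M H^{\Tr\vee} \ast \M y)[\cdot]$ derived just above the statement. I would first dispense with the qualitative half. A Parseval operator is an isometry, hence bounded with $\|\Op T\|=1<\infty$; by the converse part of Theorem \ref{Theo:L2VecBound}, any continuous LSI operator is a convolution $\Op T_{\M H}$ for a unique $\M H[\cdot]\in\ell_2(\Z^d)^{M\times N}$. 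This reduces the equivalence to a statement about the impulse response $\M H$, while in the reverse direction any $\Op T_{\M H}$ is automatically LSI by the direct part of the same theorem.

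For the core equivalences I would start from Item 3 of Proposition \ref{Prop:Parseval}, which says that the Parseval property is equivalent to $\Op T^\ast_{\M H}\circ\Op T_{\M H}=\Identity$ on $\ell^N_2(\Z^d)$. Composing the convolution $\Op T_{\M H}$ with its adjoint is again a convolution whose impulse response, by the composition formula \eqref{Eq:LSICompose}, is $(\M H^{\Tr\vee}\ast\M H)[\cdot]$, whereas the identity on $\ell^N_2(\Z^d)$ is the convolution with $\M I_N\delta[\cdot]$. Invoking the uniqueness of the impulse response in Theorem \ref{Theo:L2VecBound}, the operator identity is equivalent to the impulse-response identity $(\M H^{\Tr\vee}\ast\M H)[\cdot]=\M I_N\delta[\cdot]$, which is Property 1. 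Taking the discrete Fourier transform of both sides and using that convolution becomes a pointwise matrix product, that the frequency response of $\M H^{\Tr\vee}$ is $\widehat{\M H}^\HTop$, and that $\Fourier_{\rm d}\{\M I_N\delta[\cdot]\}=\M I_N$, the bijectivity of the Fourier isometry turns Property 1 into the paraunitary relation $\widehat{\M H}^\HTop(\bw)\widehat{\M H}(\bw)=\M I_N$, which is Property 2. Finally, Property 3 is nothing but Item 2 of Proposition \ref{Prop:Parseval} (inner-product preservation) specialized to $\Op T_{\M H}$, so it is equivalent to the Parseval property and hence to Properties 1 and 2; the constraint $M\ge N$ is forced, since an $M\times N$ matrix can satisfy $\widehat{\M H}^\HTop\widehat{\M H}=\M I_N$ only if it has full column rank $N$.

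The one point demanding care---the main obstacle---is the measure-theoretic status of the Fourier-domain statement. Since $\M H\in\ell_2(\Z^d)^{M\times N}$, its frequency response $\widehat{\M H}$ is a priori only an $L_2$ object, so the paraunitary identity can be asserted only for almost every $\bw\in\mathbb{T}^d$; I would phrase Property 2 accordingly and note that the relation in fact forces $\sigma_{\max}(\widehat{\M H}(\bw))=1$ a.e., hence $\widehat{\M H}\in L_\infty$ with $\sigma_{\sup,\M H}=\|\Op T_{\M H}\|=1$, consistent with the isometry. The passage from the operator equation to the pointwise matrix equation must likewise be read a.e., which is exactly what the uniqueness clause and the Plancherel isometry of Theorem \ref{Theo:L2VecBound} license; everything else is a direct transcription of results proved above.
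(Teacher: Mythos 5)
Your proof takes essentially the same route as the paper, whose own argument is the discussion immediately preceding the proposition: the kernel theorem (Theorem \ref{Theo:L2VecBound}) reduces any continuous LSI operator to a unique convolution $\Op T_{\M H}$, Proposition \ref{Prop:Parseval} supplies the abstract equivalences, and the adjoint is identified as $\Op T_{\M H^{\Tr\vee}}$ with frequency response $\widehat{\M H}^\HTop$, from which Items 1--3 follow exactly as you argue. Your almost-everywhere reading of the paraunitary identity (and the observation that it forces $\sigma_{\sup,\M H}=1$) is a sound refinement of the paper's ``for all $\bw$'' phrasing rather than a deviation.
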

We also note that the LSI-Parseval property implies that
$\|\M H[\cdot]\|_{\ell_2^{M \times N}(\Z^d)}=N$ and $\|\Op T_{\M H}\|=1$, 
although those conditions are obviously not sufficient.

While Item 1 suggests that the adjoint $\Op T^\ast_{\M H}=\Op T_{\M H^{\Tr\vee}}:
\ell_2^M(\Z^d) \to \ell_2^N(\Z^d) $ acts as the inverse of
$\Op T_{\M H}$, this is only true for signals $\M y[\cdot] \in \Op T_{\M H}\big(\ell_2^N(\Z^d)\big) \subset \ell^M_2(\Z^d)$ that are in the range of the operator. In other words, $\Op T^\ast_{\M H}$ is only a left inverse of $\Op T^\ast_{\M H}$, while is fails to be a right inverse in general, unless $M=N$. This is denoted by $\Op T^\ast_{\M H}=\Op T_{\M H}^+$ (generalized inverse).

While the $M$-to-$N$ filter $\Op T^\ast_{\M H}$ is generally not a Parseval filter, it is $1$-Lipschitz (since $\|\Op T^\ast_{\M H}\|=\|\Op T_{\M H}\|=1$) with its Gram
 operator
$(\Op T^{\ast\ast}_\M H \circ \Op T^\ast_\M H)=(\Op T_\M H \circ \Op T^\ast_\M H): \ell^M_2(\Z^d) \to \ell^M_2(\Z^d)$ being the orthogonal projector on the range of $\Op T_\M H$, rather than the identity.
%
%
%
Correspondingly, from the properties of the singular value decomposition (SVD), we can infer that $\widehat{\M H}^\HTop(\bw)$ with $\bw$ fixed has the same nonzero singular values as $\widehat{\M H}(\bw)$ ($N$ singular values equal to one) and that these are complemented with $(M-N)$ additional zeros to make up for the fact that $M>N$.

The filterbanks used in convolutional neural network are generally FIR, meaning that their matrix impulse response is finitely supported. This is the reason why the reminder of the chapter is devoted to the investigation of FIR-Parseval convolution operators.
To set the stage, we start with the single-channel case $N=M=1$, which has the fewest degrees of freedom.
\begin{proposition}
\label{Prop:Shifts}
The real-valued LSI operator $\Op T_h: \ell_2(\Z^d) \to \ell_2(\Z^d)$ is FIR-Parseval if and 
only if $h=\pm \delta[\cdot-\bk_0]$  for some $\bk_0 \in \Z^d$. Equivalently, $\Op T_h=\pm \Op S^{\bk_0}$ where 
$\Op S^{\bk_0}: x[\cdot]\mapsto  x[\cdot-\bk_0]$.
\end{proposition}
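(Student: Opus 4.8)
The plan is to specialize the general characterization of Proposition \ref{Theo:PasevalFilters} to the scalar case $N=M=1$ and then to exploit the finite support of $h$. With $\M H[\cdot]=h[\cdot]$ a scalar sequence, Item 2 of Proposition \ref{Theo:PasevalFilters} reduces the Parseval property to the pointwise identity $|\hat h(\bw)|^2=1$ for all $\bw \in \mathbb{T}^d$, while Item 1 reads $(h^\vee \ast h)[\cdot]=\delta[\cdot]$ (for real $h$ one has $h^{\Tr\vee}=h^\vee$). The two statements are equivalent via the convolution theorem, since $\Fourier_{\rm d}\{h^\vee \ast h\}(\bw)=\overline{\hat h(\bw)}\,\hat h(\bw)=|\hat h(\bw)|^2$. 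The sufficiency direction is then immediate: if $h=\pm\delta[\cdot-\bk_0]$, then $\hat h(\bw)=\pm\ee^{-\jj\langle \bw,\bk_0\rangle}$ has unit modulus, so $\Op T_h$ is Parseval, it is trivially FIR, and $\Op T_h\{x\}=\pm x[\cdot-\bk_0]=\pm\Op S^{\bk_0}\{x\}$.

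For necessity I would argue directly from the autocorrelation relation $a[\bk]\eqdef(h^\vee \ast h)[\bk]=\sum_{\V n \in \Z^d} h[\V n]\,h[\V n+\bk]=\delta[\bk]$, combined with the finiteness of the support of $h$. Since $\|h\|_{\ell_2}^2=a[\M 0]=1$, the sequence $h$ is nonzero and finitely supported. The key idea is to inspect the extremal indices of $\operatorname{supp} h$ with respect to a fixed translation-invariant total order on $\Z^d$; the lexicographic order will do. Let $\V p$ and $\V q$ be the smallest and largest elements of $\operatorname{supp} h$ for this order.

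I would then evaluate the autocorrelation at the extremal lag $\bk=\V q-\V p\succeq\M 0$. A term $h[\V n]\,h[\V n+\V q-\V p]$ can be nonzero only if both $\V n\in\operatorname{supp} h$, which forces $\V n\succeq\V p$, and $\V n+\V q-\V p\in\operatorname{supp} h$, which forces $\V n+\V q-\V p\preceq\V q$, i.e. $\V n\preceq\V p$ by translation-invariance of the order. Hence the only surviving term is $\V n=\V p$, giving $a[\V q-\V p]=h[\V p]\,h[\V q]\neq 0$. Since $a$ vanishes away from the origin, this forces $\V q-\V p=\M 0$, i.e. $\V p=\V q$ and the support collapses to the single point $\bk_0\eqdef\V p$. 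Finally $a[\M 0]=h[\bk_0]^2=1$ yields $h[\bk_0]=\pm 1$, so that $h=\pm\delta[\cdot-\bk_0]$.

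The main obstacle is really just care rather than difficulty: one must invoke translation-invariance of the chosen order to guarantee that $\V n\succeq\V p$ and $\V n+(\V q-\V p)\preceq\V q$ together pin down $\V n=\V p$, and confirm that $h[\V p]\,h[\V q]$ is genuinely nonzero by the definition of $\V p,\V q$ as support endpoints. An alternative, purely algebraic route would identify $\hat h$ with a Laurent polynomial $H(\bz)$, $\bz=(\ee^{-\jj\omega_1},\dots,\ee^{-\jj\omega_d})$, and observe that $|\hat h|^2\equiv 1$ means $H(\bz)H(1/\bz)=1$ as Laurent polynomials, so that $H$ is a unit of $\R[z_1^{\pm1},\dots,z_d^{\pm1}]$; since the units of this ring are exactly the monomials $c\,\bz^{\bk_0}$, one again recovers $h=c\,\delta[\cdot-\bk_0]$ with $|c|=1$. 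I would present the elementary support argument as the main proof, since it is self-contained and works uniformly in every dimension $d$.
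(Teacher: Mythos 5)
Your proof is correct and follows essentially the same route as the paper: necessity is derived from the autocorrelation identity $(h^\vee \ast h)[\cdot]=\delta[\cdot]$ by exhibiting a nonzero lag at which exactly one product $h[\V p]\,h[\V q]$ survives, which is precisely the paper's ``critical offset'' argument. Your lexicographic extremal-index construction in fact supplies the justification that the paper leaves implicit (it merely asserts that such an offset exists), and your Laurent-polynomial aside is a valid alternative formulation of the same fact.
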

\begin{proof} From 
Proposition \ref{Theo:PasevalFilters}, we know that the LSI-Parseval property is equivalent to $(h^\vee \ast h)[\bk]=\sum_{\V m \in \Z^d}h[-\V m]h[\bk -\V m]=\sum_{\V m \in \Z^d}h[\V m]h[\V m+\bk ]=\delta[\bk]$, which is obviously met
for $h=\pm \delta[\cdot-\bk_0]$.  Now, if ${\rm supp}(h)=\{\V m \in \Z^d:  h[\V m]\ne 0\}$ is finite and includes at least two distinct points, then there always exists some critical offset $\bk_0\ne 0$ such that
${\rm supp}(h) \cap {\rm supp}(h[\cdot +\bk_0])=\{\V m_0\}$; in other words, such that the intersection of the support and its shifted version by $-\bk_0$ consists of a single point.
Consequently, $\sum_{\V m \in \Z^d}h[\V m]h[\V m+\bk_0]=h[\V m_0]h[\V m_0+\bk_0]\ne 0$, which is incompatible with the definition of the Kronecker delta.
\end{proof}

Proposition \ref{Prop:Shifts} identifies the shift operators as fundamental LSI-Parseval elements, but the family is actually larger if we relax the FIR condition.
The frequency-domain condition for Parseval is $|\hat h(\bw)|=1$, which translates into the filter being all-pass. Beside any power of the shift operator, a classic example for $d=1$ is $\hat h(\omega)=\frac{\ee^{-\jj \omega}-\overline{z_0}}{1-z_0\ee^{-\jj \omega}}$, with the caveat that the impulse response of the latter is infinitely supported.

 \section{Parametrization of Parseval Filterbanks}
 \label{Sec:Parameterization}
 While the design options for (univariate) FIR Parseval filters are fairly limited (see Proposition \ref{Prop:Shifts}), we now show that
the possibilities open up considerably in the multichannel setting. This is good news for applications.

Our approach to construct trainable FIR Parseval filterbanks
is based on the definition of basic $1$-to-$N$, $N$-to-$N$, and $N$-to-$(pN)$
Parseval filters that can then be chained, 
in the spirit of neural networks, to produce more complex structures. Specifically, let $\Op T_{\M H_i}: \ell_2^{N_{i}}(\Z^d) \to \ell_2^{N_{i+1}}(\Z^d), i=1,\dots,I$
be a series of Parseval filters with $N_1=N\le N_i \le \cdots \le N_{I+1}=M$ and
$\Op T^\ast_{\M H_i}\circ \Op T_{\M H_i}= \Identity$ on $\ell_2^{N_{i}}(\Z^d)$.
Because the LSI and Parseval properties are preserved through composition,
one immediately deduces that the composed operator
\begin{align}
\Op T_{\M H}=\Op T_{\M H_I}\circ \cdots \circ\Op T_{\M H_1}: \ell_2^{N}(\Z^d) \to \ell_2^{M}(\Z^d)
\end{align}
is Parseval-LSI with 
impulse response $\M H[\cdot]=(\M H_I \ast \cdots \ast \M H_2 \ast \M H_1)[\cdot] \in \ell_2(\Z^d)^{M \times N}$. This filter is invertible from the left with its generalized inverse being
\begin{align}
\Op T^+_{\M H}=\Op T^\ast_{\M H}=\Op T^\ast_{\M H_1}\circ \cdots \circ\Op T^\ast_{\M H_I}: \ell_2^{M}(\Z^d) \to \ell_2^{N}(\Z^d),
\end{align}
which means that the inverse filtering can be achieved via a simple flow-graph transposition of the original filter architecture. 

Thus, our design concept is to rely on simple elementary modules, each being parameterized by an orthogonal matrix $\M U_i\in \C^{M \times M}$ where $M$ is typically the number of output channels.
The list of our primary modules is summarized in Table \ref{Tab:ParsevalFilters} . Additional detailed descriptions and explanations are given in the  remainder of this section.

\begin{table}\label{tab:activation}
\begin{tabular}{p{210pt} p{120pt}}
\hline \hline &\\[-1.8ex]\quad LSI-Parseval Operators & Impulse Response \\[1ex]
\hline\\[-2ex]
Patch descriptor $\mathbb{K}_M=\{\bk_1,\dots,\bk_M\}$\\
\quad ${\rm Patch}_{\mathbb{K}_M}: \ell_2^N(\Z^d) \to \ell_2^{M \times N}(\Z^d)$ & $\frac{1}{\sqrt{M}}\begin{pmatrix}\M I_N  \delta[\cdot-\bk_1]\\
\vdots \\
\M I_N  \delta[\cdot-\bk_M]\end{pmatrix}$\\[1ex]
Unitary matrix $\M U=[\M u_1 \dots \M u_N] \in \C^{N \times N}$\\[1ex]
\quad ${\rm Mult}_{\M U}: \ell_2^N(\Z^d) \to \ell_2^N(\Z^d)$ & $\M U \delta[\cdot]$\\[1ex]
\quad $\Op T_\M h= \M U\, {\rm Patch}_{\mathbb{K}_N}: \ell_2(\Z^d) \to \ell_2^N(\Z^d)$ & $\displaystyle \M h[\cdot]=\frac{1}{\sqrt{N}}\sum_{n=1}^N \M u_n\delta[\cdot-\bk_n]$\\[1ex]
Large unitary matrix $\M U=[\M U_1 \cdots \M U_p] \in \C^{p N \times p N}$\\
\quad $\M U\, {\rm Patch}_{\mathbb{K}_N}: \ell^N_2(\Z^d) \to \ell_2^{pN}(\Z^d)$ & $\displaystyle \frac{1}{\sqrt{p}}\sum_{n=1}^p \M U_n\delta[\cdot-\bk_n]$\\[1ex]

Generalized shift with $\V K=(\bk_1,\dots,\bk_N) \in \Z^{d \times N}$\\
\quad $\Op S^{\V K}: \ell_2^N(\Z^d) \to \ell^N_2(\Z^d)$ & ${\rm diag}(\delta[\cdot -\bk_1],\dots,\delta[\cdot -\bk_N])$
\\[1ex]
Frame matrix $\M A \in \C^{M \times N}$ s.t. $\M A^\HTop\M A=\M I_N$\\
\quad$\M A\Op S^{\V K}: \ell^N_2(\Z^d) \to \ell_2^{ M}(\Z^d)$ & $\displaystyle\sum_{n=1}^N \M a_n\M e_n^\HTop \delta[\cdot- \bk_n]$\\[1ex]
Unitary matrices  $\M U, \M V \in \C^{N \times N}$\\[1ex]
\quad$\M U\Op S^{\V K}: \ell^N_2(\Z^d) \to \ell_2^{ N}(\Z^d)$ & $\displaystyle\sum_{n=1}^N \M u_n \M e_n^\HTop \delta[\cdot- \bk_n]$\\[1ex]
\quad$\M U\Op S^{\V K}\M V^\HTop: \ell^N_2(\Z^d) \to \ell_2^{ N}(\Z^d)$ & $\displaystyle\sum_{n=1}^N \M u_n \M v_n^\HTop \delta[\cdot- \bk_n]$\\[3ex]
Rank-$k$ projector  $\M P_k=\M U_k\M U_k^\HTop \in \R^{N \times N}$ with $\M P^2_k=\M P_k=\M P_k^\Tr$\\[1ex]
\quad$\Op P_{\M P_k,n}: \ell^N_2(\Z^d) \to \ell_2^{ N}(\Z^d)$ & $(\M I_N - \M P_k) \delta[\cdot] + \M P_k\delta[\cdot- \M e_n]$\\[2ex]
Householder element with $\M u \in \C^N$ s.t.\ $\|\M u\|_2=1$  \\[1ex]
\quad$\Op H_{\M u,n}: \ell^N_2(\Z^d) \to \ell_2^{ N}(\Z^d)$ & $(\M I_N - \M u\M u^\HTop) \delta[\cdot] + \M u\M u^\HTop \delta[\cdot- \M e_n]$\\[1ex]
\hline\hline
\end{tabular}
\caption{Elementary parametric Parseval multi-filters. There, most filters are parameterized by a unitary matrix/frame and
a list of neighborhood indices $\bk_1,\dots,\bk_N$ (not necessarily distinct). The vector $\M e_n$ with $[\M e_n]_m=\delta_{n-m}$ is the $n$th element of a canonical basis. 
\label{Tab:ParsevalFilters}}
\end{table}

\subsection{Normalized patch operator}
Our first tool is a simple 
mechanism to augment the number of output channels of the filterbank.
It involves a patch of size $M$ specified by a list $\mathbb{K}_M=\{\bk_1, \cdots \bk_M\}$ of indices, which will thereafter be used to
describe the support of filters acting on each feature channel.
Our normalized patch operator extracts
 the signal values within the patch in running fashion as \begin{align}
{\rm Patch}_{\mathbb{K}_M}: \M x[\cdot] \mapsto \frac{1}{\sqrt{M}}\begin{pmatrix}\M x[\cdot-\bk_1]\\
\vdots\\
\M x[\cdot-\bk_M]
\end{pmatrix}.
\end{align}
One easily checks that ${\rm Patch}_{\mathbb{K}_M}: \ell_2^{N}(\Z^d) \to \ell_2^{M\times N}(\Z^d)$ is LSI and Parseval, because the $\ell_2$-norm is invariant to a shift and conserved in each of the output components---the very reason why the output is normalized by $\sqrt{M}$. Its adjoint, ${\rm Patch}^\ast_{\mathbb{K}_M}: \ell_2^{M \times N}(\Z^d) \to  \ell_2^{N}(\Z^d)$, is the signal recomposition operator 
\begin{align}
{\rm Patch}^\ast_{\mathbb{K}_M}:\begin{pmatrix}\M y_1[\cdot]\\
\vdots\\
\M y_M[\cdot]
\end{pmatrix} \mapsto \frac{1}{\sqrt{M}}\sum_{m=1}^M \M y_m[\cdot+\bk_m],
\end{align}
where the $\M y_m[\cdot]$ are $N$-vector-valued signals. 
The fundamental property for this construction is
${\rm Patch}^\ast_{\mathbb{K}_M}\circ {\rm Patch}_{\mathbb{K}_M}=\Identity$ on $\ell^N_2(\Z^d)$, as direct consequence of the isometric nature of the operator. 

For $N=1$, the impulse response of  ${\rm Patch}^\ast_{\mathbb{K}_M}$  is 
$\tfrac{1}{\sqrt{M}}\begin{bmatrix}\delta(\cdot+\bk_1)&\cdots&\delta(\cdot+\bk_M)\end{bmatrix}$ whose vector-valued Fourier transform is
$\tfrac{1}{\sqrt{M}}
\begin{bmatrix}\ee^{\jj \langle\bw, \bk_1\rangle} & \cdots &\ee^{\jj \langle\bw, \bk_M\rangle}\end{bmatrix}$. The paraunitary nature of this system is revealed in the basic relation
\begin{align}
\tfrac{1}{\sqrt{M}} \begin{bmatrix}\ee^{\jj \langle\bw, \bk_1\rangle}&
\cdots &
\ee^{\jj \langle\bw, \bk_M\rangle}
\end{bmatrix} \tfrac{1}{\sqrt{M}} \begin{bmatrix}\ee^{-\jj \langle\bw, \bk_1\rangle}\\
\vdots\\[1ex]
\ee^{-\jj \langle\bw, \bk_M\rangle} 
\end{bmatrix}
= \frac{\sum_{m=1}^M |\ee^{\jj \langle\bw, \bk_m\rangle}|^2 }{M}=1,
\end{align}
which holds for any choice of the $\bk_m$.

\subsection{Parametric $1$-to-$N$ Parseval Module}
\label{Sec:Parselva1toN}
The necessary and sufficient condition for a $1$-to-$N$ operator $\Op T_\M h: \ell_2(\Z^d) \to \ell_2^N(\Z^d)$ to have the Parseval property is
$$
(\M h^{\Tr \vee}\ast \M h)[\cdot]=\sum_{n=1}^N (h_n^\vee \ast h_n)[\cdot]=\delta[\cdot],
$$
which, once stated in in the frequency domain, 
 is 
\begin{align}
 \label{Eq:Parseval3}
\forall \bw \in \mathbb{T}^d:\quad \|\widehat {\M h}(\bw)\|^2= \sum_{n=1}^N |\hat h_n(\V \omega)|^2=1.
 \end{align}
This indicates that the frequency responses of the component filters $h_n$ should be power complementary. This is a standard requirement in wavelet theory and the construction of tight frames, which has been
the basis for various parametrizations \cite{Vetterli:1995,Strang:1996}.

What we propose here is a simple matrix-based construction of such filters
with the support of each filter also being of size $N$.
The filtering window, which is common to all channels and assimilated to a patch, is specified by
the index set $\mathbb{K}_N=\{\bk_1,\dots,\bk_N\}$. These indices are usually chosen to be contiguous and centered around the origin. For instance, $\mathbb{K}_3=\{-1,0,1\}$ specifies centered filters of size $3$ in dimension $d=1$.
Given some orthogonal matrix $\M U=\begin{bmatrix}\M u_1& \cdots& \M u_N\end{bmatrix} \in \R^{N\times N}$,
our basic parametric $1$-to-$N$ filtering operator is then given by 
\begin{align}
\label{Eq:1toNelement}
\Op T_{\M h} = {\rm Mult}_{\M U} \circ {\rm Patch}_{\mathbb{K}_N}: \ell_2(\Z^d) \to \ell_2^N(\Z^d),
\end{align}
where ${\rm Mult}_{\M U}: \M x[\cdot] \mapsto \M U \M x[\cdot]$ is the pointwise matrix-multiplication operator. 
This succession of operations yields the 
vector-valued impulse response $\M h[\cdot]=\frac{1}{\sqrt{N}}\sum_{n=1}^N \M u_n\delta[\cdot-\bk_n]$.
This filter is Parseval by construction because it is the composition of two Parseval operators.

As variant, we may also consider a reduced patch $\mathbb{K}_{N_0}=\{\bk_1,\dots,\bk_{N_0}\}$ with $N_0< N$,
which then results in a shorter Parseval filter $\M h[\cdot]=\frac{1}{\sqrt{N_0}}\sum_{n=1}^{N_0} \M u_n\delta[\cdot-\bk_n]$. The latter is parameterized by the ``truncated'' matrix $\M U_0=\begin{bmatrix}\M u_1& \cdots& \M u_{N_0}\end{bmatrix} \in \R^{N\times N_0}$, which is such that $\M U_0^\Tr \M U_0=\M I_{N_0}$ ($1$-tight frame property).

\subsection{Parametric $N$-to-$pN$ Parseval Module}
The concept here is essentially the same as in Section \ref{Sec:Parselva1toN},
except that we now have to use a larger ortho-matrix $\M U \in \R^{pN \times pN}$ and a patch neighorhood 
$\mathbb{K}_p=\{\bk_1,\dots,\bk_p\}$.
This then yields the multi-filter 
\begin{align}
\label{Eq:vectorized}
\Op T_{\M H} = {\rm Mult}_{\M U} \circ {\rm Patch}_{\mathbb{K}_p}: \ell^N_2(\Z^d) \to \ell_2^{ p N}(\Z^d),
\end{align}
which is guaranteed to have the Parseval property, based on the same arguments as before.
Its adjoint  is
\begin{align}
\Op T^\ast_{\M H}=\Op T^+_{\M H}={\rm Patch}^\ast_{\mathbb{K}_p} \circ {\rm Mult}_{\M U^\Tr}: \ell_2^{ p N}(\Z^d) \to \ell_2^{N}(\Z^d).
\end{align}
To identify the impulse response 
of ${\rm Mult}_{\M U} \circ {\rm Patch}_{\mathbb{K}_p}$, we partition $\M U=\begin{bmatrix} \M U_1 & \cdots \M U_p\end{bmatrix}$ into 
$p$ submatrices $\M U_i\in \R^{pN \times N}$, each associated with its shift $\V k_i$,
which yields
\begin{align}
\label{Eq:NtoNpimpulse}
\M H[\cdot]=\frac{1}{\sqrt{p}}\sum_{i=1}^p \M U_i \delta[\cdot-\bk_i].
\end{align}
By the orthonormality of the column vectors of $\M U$, 
we then explicitly evaluate
\begin{align}
\label{Eq:Autorr0}
(\M H^{\Tr \vee} \ast \M H)[\cdot]&=\frac{1}{p}\sum_{m=1}^p \sum_{n=1}^p \M U_m^\Tr \M U_n \delta[\cdot+\bk_m-\bk_n] \nonumber\\
&=\frac{1}{p}\sum_{n=1}^p  \M U_n^\Tr \M U_n \delta[\cdot]=\frac{1}{p}\sum_{n=1}^p   \M I_N \delta[\cdot]= \M I_N\delta[\cdot],
\end{align}
which confirms that $\Op T^\ast_{\M H}=\Op T^+_{\M H}$.

\subsection{Generalized Shift Composed with a Tight Frame}
With the view of extending \eqref{Eq:1toNelement} to vector-valued signals, we introduce the generalized shift (or scrambling) operator $\Op S^{\V K}: \ell_2^N(\Z^d) \to \ell^N_2(\Z^d)$, with translation parameter $\V K=(\bk_1, \dots,\bk_N) \in \Z^{d \times N}$, as
\begin{align}
\Op S^{\V K}\{\M x[\cdot]\}=
\begin{pmatrix}
\Op S^{\bk_1}\{x_1[\cdot]\} \\
\vdots\\
\Op S^{\bk_N}\{x_N[\cdot]\}
\end{pmatrix} 
= \begin{pmatrix}x_1[\cdot-\bk_1] \\
\vdots\\
x_N[\cdot-\bk_N]
\end{pmatrix}.\end{align}
It is the multichannel extension of the scalar shift by $\bk_0 \in \Z^d$ denoted by
$\Op S^{\bk_0}: x[\cdot] \mapsto x[\cdot- \bk_0]$.
The generalized shift is obviously LSI-Parseval and has the convenient semigroup property $\Op S^{\V K_0} \Op S^{\V K}=\Op S^{(\V K_0+\V K)}: \ell_2^N(\Z^d) \to \ell_2^N(\Z^d)$
with $\Op S^{\V K- \V K}=\Op S^{\M 0_N}=\Identity$, for any $\V K,\V K_0 \in \Z^{d \times N}$. This is parallel to the scalar setting where we have that $\Op S^{\V 0}=\Identity$ and
$\Op S^{\V k_0} \Op S^{\V k}=\Op S^{(\V k_0+\V k)} : \ell_2(\Z^d) \to \ell_2(\Z^d)$ for any $\V k,\V k_0 \in \Z^d$, so that $(\Op S^{\V k_0})^{-1}=\Op S^{-\bk_0}$
 with all shift operators being unitary.
 
Now, let $\M A=\begin{bmatrix}\M a_1 & \cdots & \M a_N \end{bmatrix}=\begin{bmatrix}\M b_1 & \cdots & \M b_M \end{bmatrix}^\Tr \in \R^{M \times N}$ with $M\ge N$ be a rectangular matrix such that
$ \M A^\Tr \M A=\M I_N$ (tight-frame property). 
The geometry is such that the column vectors  $\{\M a_n\}_{n=1}^N$ form an orthonormal family in
$\R^M$ (but not a basis unless $M=N$), while the row vectors $\{\M b_m\}_{m=1}^M$ form a $1$-tight frame of $\R^N$ that is the redundant 
counterpart of an ortho-basis. Here too, the defining property is energy conservation: $\sum_{m=1}|\langle\M b_m,\M x\rangle|^2=\langle \M A \M x,\M A \M x\rangle=\langle \M A^\Tr \M A \M x, \M x\rangle=\|\M x\|^2_2$ for all $\M x \in \R^N$ (Parseval), albeit in the simpler finite-dimensional setting.
 
Given such a tight-frame matrix $\M A\in \R^{M \times N}$ and a set of shift indices $\V K=(\bk_1, \dots,\bk_N) \in \Z^{d \times N}$, we then specify the operator
 \begin{align}
 \Op T_{\M H} = {\rm Mult}_{\M A} \circ \Op S^{\V K}=\M A\Op S^{\V K}: \ell^N_2(\Z^d) \to \ell_2^{ M}(\Z^d).
\end{align}
The matrix-valued frequency response of this filter is $\widehat{\M H}(\bw)=\M A\, {\rm diag}(\ee^{-\jj \langle\bw,\bk_1\rangle},$ $\dots,\ee^{-\jj \langle\bw,\bk_N\rangle})$, which is paraunitary, irrespectively of the choice of the shifts $\bk_m$.  

\subsection{$N$-to-$N$ Parseval Filters}
\label{Sec:ParesevalNtoN}
We know from Proposition \ref{Theo:PasevalFilters} that $\Op T_{\M H}: \ell^N_2(\Z^d) \to \ell^N_2(\Z^d)$ is a Parseval multi-filter
if and only if $\widehat{\M H}^\HTop(\bw)\widehat{\M H}(\bw)= \M I_N$ for all $\bw \in \mathbb{T}^d$.
By taking inspiration from the singular-value decomposition, this suggests the consideration of paraunitary elements of the form: 
$
\M U \widehat{\V \Lambda}(\bw)$,
$
\widehat{\V \Lambda}(\bw)\M V^\HTop$, or
$
\M U \widehat{\V \Lambda}(\bw)\M V^\HTop$,
where $\M U=\begin{bmatrix}\M u_1 &\cdots& \M u_N\end{bmatrix} \in \C^{N \times N}$ 
and $\M V=\begin{pmatrix}\M v_1 &\dots& \M v_N\end{pmatrix} \in \C^{N \times N}$ are unitary matrices,  and $\widehat{\V \Lambda}(\bw)={\rm diag}\big(\hat \lambda_1(\bw),\dots,\hat \lambda_N(\bw)\big)$ where the $\hat \lambda_n(\bw)$ are all-pass filters with $|\hat \lambda_n(\bw)|=1$ for all $\bw \in \mathbb{T}^d$.

Since our focus is on FIR filters, we invoke Proposition \ref{Prop:Shifts} to deduce that the only acceptable form of diagonal matrix
is $\widehat{\V \Lambda}(\bw)= {\rm diag}(\ee^{-\jj \langle\bw,\bk_1\rangle}, \dots,\ee^{-\jj \langle\bw,\bk_N\rangle})$ with shift parameter $(\bk_1, \dots,\bk_N)=\M K \in \Z^{d \times N}$, which is precisely the frequency response of the generalized shift operator $\Op S^{\V K}$. The resulting parametric Parseval operators are
$\M U\Op S^{\V K}$, $\Op S^{\V K}\M V^\HTop$,
and $\M U\Op S^{\V K}\M V^\HTop$. 
The impulse responses of these filters are sums of rank-1 elements with their support being specified by the $\bk_m$, which need not be distinct. Specifically, we have that
\begin{align}
\label{Eq:NtoNGeneral}
\M U\Op S^{\V K}\M V^\HTop=\Op T_{\M H}: \ell_2^N(\Z^d) \to \ell_2^N(\Z^d) \quad \mbox{ with } \quad \M H[\cdot]=\sum_{n=1}^N \M u_n \M v_n^\HTop \delta[\cdot- \bk_n],
\end{align}
which encompasses the two lighter filter variants by taking $\M U$ or $\M V$ equal to $\M I_N=\begin{bmatrix} \M e_1 & \cdots & \M e_N\end{bmatrix}$. Another canonical configuration of \eqref{Eq:NtoNGeneral} is obtained by taking $\M U=\M V$ which, as we shall see, makes an interesting connection with two classic factorization of paraunitary systems. While the operator $\M U\Op S^{\V K}\M V^\HTop$ is obviously a generalization of $\M U\Op S^{\V K}$, there is computational merit with the lighter version, especially in the context of composition.

\begin{proposition}
\label{Prop:Composition}
Let $\M W_1,\dots,\M W_{M+1} \in \C^{N \times N}$ and $\M U_1,\dots,\M U_{M+1} \in \C^{N \times N}$ be two series of orthogonal matrices,
and $\V K_1, \dots, \V K_M \in \Z^{d \times N}$ some corresponding shift indices.
Then, the composed parametric operators
\begin{align}
\label{Eq:Factor1}
\M W_{M+1} \Op S^{\V K_M}\M W_{M} \cdots \M W_3\Op S^{\V K_2}\M W_2\Op S^{\V K_1}\M W_1
\end{align}
and 
\begin{align}
\label{Eq:Factor2}
\M U^\HTop_{M+1}(\M U_M\Op S^{\V K_M}\M U^\HTop_M) \cdots (\M U_2\Op S^{\V K_2}\M U^\HTop_2) (\M U_1\Op S^{\V K_1}\M U^\HTop_1)
\end{align}
span the same family of $N$-to-$N$ Parseval multi-filters.
\end{proposition}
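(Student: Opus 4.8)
The plan is to exhibit an explicit, invertible change of variables on the matrix parameters that carries one parametrized product into the other while leaving the shift indices $\V K_1,\dots,\V K_M$ untouched. Both expressions are alternating products of pointwise unitary multiplications ${\rm Mult}_{\M U}$ and generalized shifts $\Op S^{\V K_i}$, each of which is LSI-Parseval; hence every operator produced by either formula automatically belongs to the family of $N$-to-$N$ Parseval multi-filters, and the only substantive content is the equality of the two \emph{sets} of operators obtained as the matrices range over the unitary group.

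First I would expand the telescoped form \eqref{Eq:Factor2}. The pointwise factors sitting between two consecutive shift blocks merge: each block $(\M U_i\Op S^{\V K_i}\M U^\HTop_i)$ meets, on its left, the leading matrix $\M U_{i+1}$ of the next block (or the boundary factor $\M U^\HTop_{M+1}$), so that \eqref{Eq:Factor2} equals
\begin{align*}
(\M U^\HTop_{M+1}\M U_M)\,\Op S^{\V K_M}\,(\M U^\HTop_M\M U_{M-1})\,\Op S^{\V K_{M-1}}\cdots \Op S^{\V K_1}\,\M U^\HTop_1 .
\end{align*}
Since a product of unitary matrices is unitary, every parenthesized factor is a single unitary matrix. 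Comparing with \eqref{Eq:Factor1} term by term then forces the identification $\M W_1=\M U^\HTop_1$ and $\M W_i=\M U^\HTop_i\M U_{i-1}$ for $i=2,\dots,M+1$, which establishes the inclusion of the family \eqref{Eq:Factor2} into the family \eqref{Eq:Factor1}.

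For the reverse inclusion I would invert this relation. The system is triangular in the index $i$, hence uniquely solvable by the recursion $\M U_1=\M W^\HTop_1$ and $\M U_i=\M U_{i-1}\M W^\HTop_i$, which telescopes to $\M U_i=(\M W_i\M W_{i-1}\cdots\M W_1)^\HTop$; each $\M U_i$ is again a product of unitaries, hence unitary. Thus every choice of unitary $\M W_1,\dots,\M W_{M+1}$ arises from exactly one choice of unitary $\M U_1,\dots,\M U_{M+1}$ under the same shifts, so \eqref{Eq:Factor1} is in turn contained in \eqref{Eq:Factor2}, and the two families coincide.

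I do not expect a genuine obstacle: the argument is a purely algebraic change of variables, and the two facts it rests on---closure of the unitary group under products and the invertibility of the triangular recursion---are elementary. The only point demanding care is the bookkeeping: correctly handling the two boundary factors $\M W_1$ and $\M W_{M+1}$ (which involve a single $\M U$ rather than a product of two) and keeping the left-to-right order of composition consistent with the order of application, so that each shift $\Op S^{\V K_i}$ stays paired with its correct neighboring matrices.
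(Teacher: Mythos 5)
Your proposal is correct and follows essentially the same route as the paper: merge the adjacent unitary factors in \eqref{Eq:Factor2} to read off the substitution $\M W_1=\M U_1^\HTop$, $\M W_i=\M U_i^\HTop\M U_{i-1}$, and invert the resulting triangular recursion ($\M U_i^\HTop=\M W_i\M W_{i-1}\cdots\M W_1$) for the converse, with closure of the unitary group under products doing the work in both directions. Your write-up is in fact slightly more careful than the paper's, whose converse contains index typos in the telescoped products.
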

Indeed, by setting $\M W_1=\M U^\HTop_1$, $\M W_2=\M U^\HTop_2\M U_1$,
\dots, $\M W_{M}=\M U^\HTop_M\M U_{M-1}$
and $\M W_{M+1}=\M U^\HTop_{M+1}\M U_M$, we can use \eqref{Eq:Factor1} to replicate 
\eqref{Eq:Factor2}. Again, the key is that the multiplication (composition) of two unitary matrices yields another unitary matrix.
Conversely, \eqref{Eq:Factor2} reproduces \eqref{Eq:Factor1} if we set
$\M U^\HTop_1=\M W_1$, $\M U^\HTop_2=\M W_2\M U^\HTop_1= \M W_2 \M W_1$,\dots, $\M U^\HTop_M=\M W_{M-1} \M U^\HTop_{N-1}=\M W_{M-1}\M W_{M-2} \cdots \M W_1$, and $\M U^\HTop_{M+1}=\M W_M\M U_N^\HTop=\M W_M \cdots \M W_1$.

\subsection{Projection-based Parseval Filterbanks}
\label{Sec:Projection}
These $N$-to-$N$ filterbanks are parameterized by a projection matrix $\M P$. They are multi-dimensional adaptations of classic canonical structures that were introduced by Vaidyanathan and others for the factorization of paraunitry matrices for $d=1$ \cite{Soman1993}.
To explain the concept, we recall that a matrix $\M P$ is a member of $\mathbb{P}(N,k)$ (the set of all orthogonal projection matrices in $\R^N$ of rank $k$) if and only if it fulfils the following conditions:
\begin{enumerate}
\item Rank: $\M P \in \R^{N \times N}$ with ${\rm rank}(\M P)=k$.
\item Idempotence: $\M P\M P=\M P$.
\item Symmetry: $\M P^\Tr =\M P$, which together with Item 2, implies that $\M P$ is an ortho-projector.
\end{enumerate}
Any $\M P \in  \mathbb{P}(N,k)$ can be parameterized 
as $\M P=\sum_{n=1}^k \M u_n \M u_n^\Tr={\rm Proj}_{{\rm span}(\M u_1,\dots,\M u_k)}$,
where $\M u_1,\dots,\M u_k$ is a set of orthogonal vectors in $\R^N$. Since $\M P$ is an ortho-projector,
it induces the direct-sum decomposition $\R^N={\rm Ran}(\M P) \oplus {\rm Ker}(\M P)$, where the members
of ${\rm Ran}(\M P)$ are eigenvectors of $\M P$ with eigenvalue $1$ (projection property), while the members of ${\rm Ker}(\M P)={\rm Ran}(\M P)^\perp$ are eigenvectors with eigenvalue $0$. The parametrization of $\M P$ then simply follows from the SVD, with the vectors
$\M u_1,\dots,\M u_k$ being any set of orthogonal members of ${\rm Ran}(\M P)$.
In particular, the rank-1 ortho-projectors
are parameterized by a single unit vector, with $\mathbb{P}(N,1)=\{\M u\M u^\Tr \in \R^{N \times N} \mbox{s.t. } \|\M u\|_2=1\}$.
Finally, two projection matrices $\M P\in\mathbb{P}(N,k)$ and
$\widetilde{\M P}\in\mathbb{P}(N,N-k)$ are said to be complementary if  $\M P+\widetilde{\M P}=\M I_N$. In fact, $\M P_k\in\mathbb{P}(N,k)$ has a single complementary projector that is given by $\M I_N-\M P \in \mathbb{P}(N,N-k)$.

A basic FIR-Parseval projection element is characterized by a matrix impulse response of the form
\begin{align}
\label{Eq:ProjectionFilter}
\M H_{\M P,\V k_1}[\cdot]=(\M I_N -\M P)\delta[\cdot]+\M P \delta[\cdot-\V k_1],
\end{align}
where $\M P \in \R^{N \times N}$ is a projection matrix and $\V k_1\in[-1,1]^d\backslash\{\V 0\}$ is some elementary (multidimensional) unit shift. 
We shall refer to such a structure by PROJ-$k$, with $k$ being the rank the projector. 
The impulse response of a PROJ-$1$ element is
\begin{align}
\label{Eq:HouseholderFilter}
\M H_{\M u,\V k_1}[\cdot]=(\M I_N - \M u\M u^\Tr)\delta[\cdot] + \M u\M u^\Tr \delta[\cdot-\V k_1]
\end{align}
which, similarly to a Householder matrix, can be parameterized by a single unit vector $\M u$.
More generally for PROJ-$k$, the condition $\M P\in\mathbb{P}(N,k)$ translates into the existence of an ortho-matrix
$\M U=[\M u_1 \cdots \M u_N]\in \R^{N \times N}$ such that $\M P=\sum_{n=1}^k \M u_n \M u_n^{\Tr}$
and $(\M I_N - \M P)=\sum_{n=k+1}^N \M u_n \M u_n^{\Tr}$. This then allows us
to express the convolution operator $\M x[\cdot]  \mapsto (\M H_{\M P,\V k_1}\ast \M x)[\cdot]$ as
$\Op T_{\M H_{\M P,\V k_1}}=\M U\Op S^{\V K_1}\M U^\HTop$ with a scrambling matrix $\V K_1$ that consists of two shifts only: $\V k_1$ repeated $k$ times, and  $\V 0$ (zero) for the remaining entries.
Consequently, \eqref{Eq:ProjectionFilter} and \eqref{Eq:HouseholderFilter} are two special cases of \eqref{Eq:NtoNGeneral}, which confirms their Parseval  property.

While the $N$-to-$N$ scheme we presented in Section \ref{Sec:ParesevalNtoN} is more general and also suggests some natural computational streamlining (see Proposition \ref{Prop:Composition}), arguments can be made in favor of the use of PROJ-$k$ filtering components, each of which has a minimal support of size $2$.

The strongest argument is theoretical but only holds for $d=1$  \cite{Gao2001}. Specifically, it has been shown that
all Parseval filters of a fixed McMillan degree (i.e., the total number of delays required to implement the filterbank) admit a factorization in terms of Proj-$1$ elements \cite{Soman1993}. Likewise, any filterbank with filters of fixed support $M$ admits a factorization in terms of Proj-$k$ elements, which ensures that such a parametrization is complete \cite{Turcajova1994b}. The tricky part in this latter type of factorization is that it also requires the adjustment of the rank $k_i$ of each component. 

Unfortunately, such results do not generalize to higher dimensions because of the lack of a general polynomial factorization theorem for $d>1$. Simply stated, this means that there are many multidimensional filters that cannot be realized from a composition of elementary filters of size $2$.
For $d=1$, the elementary shift in \eqref{Eq:ProjectionFilter} and \eqref{Eq:HouseholderFilter} is set to $k_1=1$, but it is not clear how to proceed systematically in higher dimensions.

In the context of a convolutional neural network where many design choices are ad hoc, the lack of guarantee of completeness among all Parseval filters of size $M$ (one arbitrary family of filters among many others) is not particularly troublesome. The more important issue is to be able to exploit the available degrees of freedom by adjusting the parameters for best performance during the training procedure.
This is achieved effectively for $d=2$ in the block-convolution orthogonal parametrization (BCOP) framework \cite{Li2019}, which relies on the composition of PROJ-$k$ with $\bk_1\in \{(0,1),(1,0)\}$ (in alternation). 
By formulating the training problem with twice the number of channels (half of which are dummy and constrained to have zero output) 
with $\M P\in\mathbb{P}(2N,N)$, the authors are also able to optimally adjust the parameter $k$ (rank of the projector) for each unit.

\section{Application to Denoising and Image Reconstruction}
\label{sec:PnPReconstruction}

We now discuss  the application of Parseval filterbanks to biomedical image reconstruction. 
Specifically, we shall rely on 1-Lipschitz neural networks that use Parseval convolution layers and that are trained for the denoising of a representative set of images.

Depending on the context, the image to be reconstructed is described as a signal $s[\cdot] \in \ell_2(\Z^d)$ or as the vector 
$\M s=(s[\bk])_{\bk \in \Omega}\in \R^{K}$, where $\Omega \subset \Z^d$ is a region of interest composed of a finite number $K$ of pixels.
Our computational task is to recover $\M s \in \mathbb{R}^{K}$ from the noisy measurement vector
\begin{align}
\label{first_equ}
 \mathbf{y} = \mathbf{As} +{\mathbf{n}} \in \mathbb{R}^{M},
\end{align}
where $\mathbf{n}$ is some (unknown) noise component and where $\mathbf{A} \in \mathbb{R}^{M \times K}$ is the system matrix that models the physics of the acquisition process. A simplified version of \eqref{first_equ} with $M=K$ and $\M A=\M I$ (identity) is the basic denoising problem, where the task is to recover $\M s$ from the noisy signal
\begin{align}
\label{first_equ0}
 \mathbf{z} = \mathbf{s} + \mathbf{n} \in \mathbb{R}^{K}.
\end{align}
\subsection{From Variational to Iterative Plug-and-Play Reconstruction}
To make signal-recovery problems well-posed mathematically, one usually incorporates prior knowledge about the unknown image $\M s$ by imposing regularity constraints on the solution.
This leads to the variational reconstruction 
\begin{equation}
\label{eq:obj_function}
    \M s^\ast =\argmin_{\M s \in \mathbb{R}^{K}} \left(J(\mathbf{y}, \mathbf{A}\M s) + R(\M s)\right),
\end{equation}
where $J\colon \R^{M} \times \R^{M} \to \R^{+}$ is a data-fidelity term and $R\colon \R^{K} \to \R^{+}$ is a regularization functional that penalizes ``non-regular'' solutions.
If $\M s \mapsto J(\mathbf{y}, \mathbf{A}\M s)$ is differentiable and $R$ is convex, then \eqref{eq:obj_function} can be solved by the iterative forward-backward splitting (FBS) algorithm \cite{Combettes2005} with
\begin{equation}
    \M s^{k+1}=\prox_{\alpha R}\bigl\{\M s^{k}-\alpha \pmb{\nabla} J(\mathbf{y}, \mathbf{As}^{k})\bigr\}.
\end{equation}
Here, $\pmb{\nabla} J(\mathbf{y}, \mathbf{As})$ is the gradient of $J$ with respect to $\M s$, $\alpha \in \R$ is the stepsize of the update, and $\prox_{R}$ is
the proximal operator of $R$
defined as \begin{align}
\prox_{R}\{\mathbf{z}\}= \argmin_{\M s \in \R^{K}} \left(\tfrac12 \|\M z-\mathbf{s}\|^{2} + R(\M s)\right).
\end{align}
An important observation is that $\prox_{R}: \R^K \to \R^K$ actually returns the solution of the denoising problem 
with a variational formulation that
is a particular case of \eqref{eq:obj_function} with $\M A=\M I$ and the quadratic data term $J(\mathbf{z}, \M s)= \tfrac12 \|\M z-\mathbf{s}\|^{2}$.

The philosophy of PnP algorithms \cite{Venkatakrishnan2013plug} is to replace $\prox_{\alpha R}$ with an off-the-shelf denoiser $\Op D\colon \mathbb{R}^{K} \to \mathbb{R}^{K}$.
While not necessarily corresponding to an explicit regularizer $R$, this approach has led to improved results in image reconstruction, as shown in \cite{Ryu2019plug, Sun2021, Ye2018}.
The convergence of the PnP-FBS iterations
\begin{equation}
\label{eq:pnp_fbs}
    \M s^{k+1}=\Op D\bigl\{\M s^{k}-\alpha \pmb{\nabla} J(\mathbf{y}, \mathbf{As}^{k})\bigr\}
\end{equation}
can be guaranteed \cite[Proposition 15]{Hertrich2021} if 
\begin{itemize}
    \item the denoiser $\Op D$ is averaged, which means that is takes the form $\Op D = \beta \Op R + (1 - \beta)\operatorname{Id}$ with $\beta \in(0, 1)$ and an operator $\Op R: \R^K \to \R^K$ such that ${\rm Lip}(\Op R)\le 1$; 
    \item the data term $J(\mathbf{y}, \mathbf{H}\cdot)$ is convex, differentiable with $L$-Lipschitz gradient, and $\alpha \in (0, 2/L)$.
\end{itemize}
Moreover, it is possible to prove that the solution(s) of the PnP algorithm 
satisfies the properties expected of a faithful reconstruction. The first such property is a joint form of consistency between the reconstructed image $\M s^\ast$ (outcome of the algorithm) and the measurement $\M y$ (input).

\begin{proposition}\label{prop:stability1}
Let $\M s_1^{*}$ and $\M s_2^{*}$ be fixed points of \eqref{eq:pnp_fbs} for measurements $\vec y_1$ and $\vec y_2$, respectively.
If the operator $\Op D$ is averaged with $\beta \leq 1/2$ and $J(\mathbf{y}, \mathbf{As}) = \frac{1}{2}\|\mathbf{y} - \mathbf{As}\|_2^{2}$, then it holds that
\begin{equation}\label{eq:EstForward}
\|\vec A\M s_1^{*} - \vec A\M s_2^{*}\| \leq \|\vec y_1 - \vec y_2\|.
\end{equation}
\end{proposition}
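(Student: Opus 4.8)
The plan is to convert the two fixed-point relations into a single variational inequality for the residual $\M A\M s_1^\ast - \M A\M s_2^\ast$. For the quadratic data term, the gradient is $\pmb{\nabla} J(\M y_i,\M A\M s)=\M A^\Tr(\M A\M s-\M y_i)$, so each fixed point of \eqref{eq:pnp_fbs} satisfies $\M s_i^\ast=\Op D\{\M u_i\}$ with $\M u_i=\M s_i^\ast-\alpha\M A^\Tr(\M A\M s_i^\ast-\M y_i)$. First I would record the displacement identity
\begin{align*}
(\Identity-\Op D)\{\M u_i\}=\M u_i-\M s_i^\ast=-\alpha\M A^\Tr(\M A\M s_i^\ast-\M y_i),
\end{align*}
which ties the ``stall'' of the averaged operator directly to the data-fidelity residual.

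Next I would exploit the averaged structure $\Op D=(1-\beta)\Identity+\beta\Op R$ with ${\rm Lip}(\Op R)\le 1$. Expanding $\|\Op D\{\M u_1\}-\Op D\{\M u_2\}\|^2$ and invoking the nonexpansiveness of $\Op R$ yields the standard $\beta$-averaged inequality
\begin{align*}
\|\Op D\{\M u_1\}-\Op D\{\M u_2\}\|^2\le \|\M u_1-\M u_2\|^2-\tfrac{1-\beta}{\beta}\big\|(\Identity-\Op D)\{\M u_1\}-(\Identity-\Op D)\{\M u_2\}\big\|^2.
\end{align*}
Writing $\Delta\M s=\M s_1^\ast-\M s_2^\ast$, $\Delta\M y=\M y_1-\M y_2$, and $\M r=\M A\Delta\M s-\Delta\M y$, the left-hand side is $\|\Delta\M s\|^2$, the displacement difference is $-\alpha\M A^\Tr\M r$, and $\M u_1-\M u_2=\Delta\M s-\alpha\M A^\Tr\M r$. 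Substituting these and expanding $\|\M u_1-\M u_2\|^2$, the crucial point is that the $\|\Delta\M s\|^2$ contributions cancel exactly, leaving
\begin{align*}
0\le -2\alpha\langle\M A\Delta\M s,\M r\rangle+\alpha^2\Big(1-\tfrac{1-\beta}{\beta}\Big)\|\M A^\Tr\M r\|^2=-2\alpha\langle\M A\Delta\M s,\M r\rangle+\alpha^2\tfrac{2\beta-1}{\beta}\|\M A^\Tr\M r\|^2.
\end{align*}

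Finally the hypothesis $\beta\le 1/2$ enters: it forces $\tfrac{2\beta-1}{\beta}\le 0$, so the last term is nonpositive and can be discarded to obtain $\langle\M A\Delta\M s,\M r\rangle\le 0$. Unfolding $\M r=\M A\Delta\M s-\Delta\M y$ gives $\|\M A\Delta\M s\|^2\le\langle\M A\Delta\M s,\Delta\M y\rangle$, and the Cauchy--Schwarz inequality then yields $\|\M A\Delta\M s\|\le\|\Delta\M y\|$, which is \eqref{eq:EstForward}. The hard part is not any single computation but the bookkeeping that produces the exact cancellation of the $\|\Delta\M s\|^2$ terms: without it one is left with an uncontrolled $\|\Delta\M s\|$ term, and it is precisely the averaged (rather than merely nonexpansive) nature of $\Op D$, together with the threshold $\beta\le 1/2$, that makes the remaining residual term carry the favorable sign.
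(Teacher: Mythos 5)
Your proof is correct; every step checks out, including the key cancellation: with $\M u_i=\M s_i^\ast-\alpha\M A^\Tr(\M A\M s_i^\ast-\M y_i)$ the averaged inequality indeed reduces to $0\le -2\alpha\langle\M A\Delta\M s,\M r\rangle+\alpha^2\tfrac{2\beta-1}{\beta}\|\M A^\Tr\M r\|^2$, and $\beta\le 1/2$ makes the second term disposable. Your route differs from the paper's in the key lemma used to encode averagedness. The paper uses $\beta\le 1/2$ \emph{upfront}: it writes $2\Op D-\operatorname{Id}=2\beta\Op R+(1-2\beta)\operatorname{Id}$ and notes this reflected operator is nonexpansive (a convex combination, valid precisely because $1-2\beta\ge 0$), then applies this nonexpansiveness to the two fixed-point arguments, squares, and lands on the same monotonicity inequality $\langle\M A\Delta\M s,\M A\Delta\M s-\Delta\M y\rangle\le 0$ that you obtain, finishing identically with Cauchy--Schwarz. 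You instead invoke the quantitative descent inequality for $\beta$-averaged operators, $\|\Op D\{\M u_1\}-\Op D\{\M u_2\}\|^2\le\|\M u_1-\M u_2\|^2-\tfrac{1-\beta}{\beta}\|(\operatorname{Id}-\Op D)\{\M u_1\}-(\operatorname{Id}-\Op D)\{\M u_2\}\|^2$, and only use $\beta\le 1/2$ at the very end to discard the residual. The two lemmas are equivalent characterizations of firm nonexpansiveness when $\beta\le 1/2$, so the proofs are close cousins; what yours buys is that the slack term $\alpha^2\tfrac{1-2\beta}{\beta}\|\M A^\Tr\M r\|^2$ remains explicit, so for $\beta<1/2$ you could retain it to get the strict refinement $\|\M A\Delta\M s\|^2+\alpha\tfrac{1-2\beta}{2\beta}\|\M A^\Tr\M r\|^2\le\langle\M A\Delta\M s,\Delta\M y\rangle$, and it makes transparent that $\beta\le 1/2$ is exactly the threshold at which the favorable sign appears. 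What the paper's version buys is brevity: the reflection argument needs only the triangle inequality, with no squaring of the averaged decomposition. One cosmetic point: you assert the descent inequality as standard with only a sketch; since the paper is otherwise self-contained, you should include the two-line verification via the convexity identity $\|(1-\beta)\M a+\beta\M b\|^2=(1-\beta)\|\M a\|^2+\beta\|\M b\|^2-\beta(1-\beta)\|\M a-\M b\|^2$ together with $\|\M b\|\le\|\M a\|$.
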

\begin{proof}
If $\Op D$ is $\beta$-averaged with $\beta \leq 1/2$, then $(2\Op D - \operatorname{Id})$ is 1-Lipschitz since\begin{align}
\|(2\Op D-\operatorname{Id})\{\mathbf{z}_1 - \mathbf{z}_2\}\| & = \|2\beta(\Op R\{\vec z_1\} - \Op R\{\vec z_2\} + (1-2\beta)(\vec z_1 - \vec z_2)\| \notag\\
& \leq 2\beta \|\Op R\{\vec z_1\} - \Op R\{\vec z_2\}\| + (1-2\beta)\|\vec z_1 - \vec z_2\| \notag \\
& \leq \|\vec z_1 -\vec z_2\|, \quad \forall \vec z_1, \vec z_2 \in \mathbb{R}^{K}.
\end{align}
Using this property, we get that
\begin{align}
\|(2\Op D-\operatorname{Id})\{\M s_1^{*} - \alpha \pmb{\nabla} J(\vec A \M s_1^{*}, \vec y_1)\} - (2\Op D-\operatorname{Id})\{\M s_2^{*} - \alpha \pmb{\nabla} J(\vec A \M s_2^{*}, \vec y_2)\}\| \notag \\
\leq \|(\M s_1^{*} - \alpha \pmb{\nabla}J(\vec A \M s_1^{*}, \vec y_1)) - (\M s_2^{*} - \alpha \pmb{\nabla} J(\vec A \M s_2^{*}, \vec y_2))\|.
\end{align}
From the fixed-point property of  $\M s_1^{*}$ and $\M s_2^{*}$, it follows that
\begin{align}
\|2(\M s_1^{*} - \M s_2^{*}) - (\M s_1^{*} - \alpha \pmb{\nabla} J(\vec A \M s_1^{*}, \vec y_1)) + (\M s_2^{*} - \alpha \pmb{\nabla} J(\vec A \M s_2^{*}, \vec y_2))\| \notag \\
\leq \|(\M x_1^{*} - \alpha \pmb{\nabla} J(\vec A \M s_1^{*}, \vec y_1)) - (\M s_2^{*} - \alpha \pmb{\nabla} J(\vec A \M s_2^{*}, \vec y_2))\|.
\end{align}
Next, we use the fact that $\pmb{\nabla} J(\vec A \M s, \vec y) = \vec A^\Tr(\vec A \M s - \vec y)$ and develop both sides as
\begin{align}
\langle \M s_1^{*} - \M s_2^{*}, \vec A^\Tr(\vec A \M s_2^{*} - \vec y_2) - \vec A^\Tr(\vec A \M s_1^{*} - \vec y_1)\rangle \geq 0.
\end{align}
Finally, we move $\vec A^\Tr$ to the other side of the inner product and invoke the Cauchy-Schwartz inequality to get that
\begin{align}
\|\vec A (\M s_1^{*} - \M s_2^{*})\| \|\vec y_1 - \vec y_2\| \geq \langle \vec A(\M s_1^{*} - \M s_2^{*}), \vec y_1 - \vec y_2 \rangle \geq \|\vec A(\M s_1^* - \M s_2^{*})\|^{2},
\end{align}
which is equivalent to \eqref{eq:EstForward}.
\end{proof}

When $\vec A$ is invertible, \eqref{eq:EstForward} yields the direct relation 
\begin{equation}\label{eq:H_invert}
\|\M s_1^{*} - \M s_2^{*}\| \leq \frac{1}{\sigma_{\text{min}}(\vec A^\Tr \vec A)}\|\vec y_1 - \vec y_2\|.
\end{equation}
It ensures that the iterative reconstruction algorithm is itself globally Lipschitz stable. In other words, a small deviation of the input 
can only result in a limited deviation of the output, which intrinsically provides protection against hallucinations.
Under slightly stronger constraints on $\Op D$, we have a comparable result for non-invertible $\M A$.
\begin{proposition}\label{prop:stability2}
    In the setting of Proposition \ref{prop:stability1} and for a $L_0$-Lipschitz denoiser $\Op D$ with $L_0<1$, it holds that
    \begin{equation}\label{eq:EstCont}
    \|\M s_1^{*} - \M s_2^{*}\| \leq        \frac{\alpha \|\vec A\|L_0}{1 - L_0}\|\vec     y_1 - \vec y_2\|.
    \end{equation}
\end{proposition}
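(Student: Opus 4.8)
The plan is to exploit the fixed-point characterization of $\M s_1^{*}$ and $\M s_2^{*}$ together with the strict contraction property $\mathrm{Lip}(\Op D)=L_0<1$, turning one step of the PnP-FBS map into a self-bounding inequality for $\|\M s_1^{*}-\M s_2^{*}\|$. First I would write out the fixed-point equations explicitly. Since we are in the setting of Proposition \ref{prop:stability1}, the data term is quadratic and $\pmb{\nabla} J(\vec A\M s,\vec y)=\vec A^\Tr(\vec A\M s-\vec y)$, so each fixed point satisfies $\M s_i^{*}=\Op D\{\M s_i^{*}-\alpha \vec A^\Tr(\vec A\M s_i^{*}-\vec y_i)\}$. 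Applying the $L_0$-Lipschitz bound on $\Op D$ to the difference of the two right-hand sides, and collecting the $\M s_i^{*}$-terms into the matrix $\M I-\alpha\vec A^\Tr\vec A$ while isolating the $\vec y_i$-terms, gives
\begin{align*}
\|\M s_1^{*}-\M s_2^{*}\|\le L_0\,\bigl\|(\M I-\alpha\vec A^\Tr\vec A)(\M s_1^{*}-\M s_2^{*})+\alpha\vec A^\Tr(\vec y_1-\vec y_2)\bigr\|.
\end{align*}
A triangle inequality then splits the right-hand side into a term proportional to $\|\M s_1^{*}-\M s_2^{*}\|$ and a term proportional to $\|\vec y_1-\vec y_2\|$.

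The one quantitative step that is not pure bookkeeping is the bound $\|\M I-\alpha\vec A^\Tr\vec A\|\le 1$, which controls the first term by $L_0\|\M s_1^{*}-\M s_2^{*}\|$. This is exactly where the step-size restriction enters. The matrix $\vec A^\Tr\vec A$ is symmetric positive semidefinite with $\lambda_{\max}(\vec A^\Tr\vec A)=\|\vec A\|^2=L$, the latter being the Lipschitz constant of $\pmb{\nabla} J$. Hence the eigenvalues of $\M I-\alpha\vec A^\Tr\vec A$ are $1-\alpha\lambda_i$, which lie in $[1-\alpha L,\,1]$; the convergence hypothesis $\alpha\in(0,2/L)$ forces $1-\alpha L>-1$, so that $|1-\alpha\lambda_i|\le 1$ for every eigenvalue and therefore $\|\M I-\alpha\vec A^\Tr\vec A\|\le 1$.

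Combining the two steps and using $\|\vec A^\Tr\|=\|\vec A\|$ on the remaining term, I would arrive at
\begin{align*}
\|\M s_1^{*}-\M s_2^{*}\|\le L_0\,\|\M s_1^{*}-\M s_2^{*}\|+\alpha L_0\,\|\vec A\|\,\|\vec y_1-\vec y_2\|.
\end{align*}
Because $L_0<1$, the first term can be moved to the left-hand side and the inequality divided by $(1-L_0)$ to obtain \eqref{eq:EstCont}.

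The only genuine obstacle is the operator-norm estimate on $\M I-\alpha\vec A^\Tr\vec A$; everything else reduces to the linearity of the gradient, the triangle inequality, and submultiplicativity of the operator norm. It is worth emphasizing that, in contrast to Proposition \ref{prop:stability1}, this argument does not invoke the averagedness or the $\beta\le 1/2$ assumption: it uses solely that $\Op D$ is a strict contraction. The \emph{setting of Proposition \ref{prop:stability1}} is therefore needed here only to fix $J$ as the quadratic data term, which is what makes the gradient affine and lets the $\M s_i^{*}$-dependence factor through the single matrix $\M I-\alpha\vec A^\Tr\vec A$.
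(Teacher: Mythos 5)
Your proof is correct, and it takes a genuinely (if mildly) different route from the paper's. The paper runs the estimate along the iteration: starting from arbitrary initializations $\M s_1^{0},\M s_2^{0}$ it applies the one-step bound $\|\M s_1^{k}-\M s_2^{k}\|\le L_0\|\M s_1^{k-1}-\M s_2^{k-1}\|+\alpha L_0\|\vec A\|\,\|\vec y_1-\vec y_2\|$ recursively, sums the geometric series $\sum_{n\ge 1}L_0^{n}$, and lets $k\to\infty$. You instead insert the fixed points into the fixed-point equation once and rearrange the resulting self-bounding inequality $a\le L_0 a+b$ using $L_0<1$. The core estimate is identical in both arguments: the $L_0$-Lipschitz bound on $\Op D$, the split into the $(\M I-\alpha\vec A^\Tr\vec A)$-term and the $\alpha\vec A^\Tr(\vec y_1-\vec y_2)$-term, and the operator-norm bound $\|\M I-\alpha\vec A^\Tr\vec A\|\le 1$. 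What your version buys is twofold. First, it is self-contained at the fixed points and avoids the limit step, which in the paper tacitly presumes $\M s_i^{k}\to\M s_i^{*}$; that convergence is of course justified (the composite map is an $L_0$-contraction, so Banach's theorem gives existence, uniqueness, and convergence of the iterates), but the paper does not say so, whereas your argument needs no such supplement and holds verbatim for any fixed points. Second, you make explicit the spectral justification of $\|\M I-\alpha\vec A^\Tr\vec A\|\le 1$ via the eigenvalues $1-\alpha\lambda_i\in(1-\alpha L,1]\subset(-1,1]$ under $\alpha\in(0,2/L)$ with $L=\|\vec A\|^{2}$ — a step the paper uses silently in passing from its second to its third displayed line. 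Your closing remark is also accurate: neither proof uses averagedness or $\beta\le 1/2$; the reference to the setting of Proposition \ref{prop:stability1} serves only to fix the quadratic data term, which makes $\pmb{\nabla}J$ affine so that the $\M s$-dependence factors through $\M I-\alpha\vec A^\Tr\vec A$.
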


\begin{proof}
\begin{align}
\|\M s_1^{k} - \M s_2^{k}\| & = \|\Op D\{\M s_1^{k-1} - \alpha \M A^\Tr(\vec A \M s_1^{k-1} - \vec y_1)\} - \Op D\{\M s_2^{k-1} - \alpha \vec A^\Tr(\vec A \M s_2^{k-1} - \vec y_2)\}\| \notag \\
& \leq L _0\|(\vec I - \alpha \vec A^\Tr \vec A)(\M s_1^{k-1} - \M s_2^{k-1}) - \alpha \vec A^\Tr(\vec y_1 - \vec y_2)\| \notag \\
& \leq L_0 \|\M s_1^{k-1} - \M s_2^{k-1}\| + \alpha L_0 \|\vec A\|\|\vec y_1 - \vec y_2\| \notag \\
& \leq L_0^{2}\|\M s_1^{k-2} - \M s_2^{k-2}\| + \alpha \|\vec A\|(L_0 + L_0^{2}) \|\vec y_1 - \vec y_2\| \notag \\
& \leq L_0^{k} \|\M s_1^{0} - \M s_2^{0}\| + \alpha \|\vec A\| \|\vec y_1 - \vec y_2\| \sum_{n=1}^{k} L_0^{n}.
\end{align}
Taking the limit $k \rightarrow \infty$, we get that $\| \M s_1^{*} - \M s_2^{*}\| \leq \frac{\alpha \|\vec A\|L_0}{1 - L_0} \|\vec y_1 - \vec y_2\|$.
\end{proof}

Since it is formulated as a data fitting problem, the reconstruction \eqref{eq:obj_function} generally has better data consistency than the one provided by end-to-end neural-network frameworks that directly reconstruct $\M s$ from $\mathbf{y}$ \cite{Jin2017,Mccann2017Convolutional,Wang2020,Lin2021artificial}.
Those latter approaches are also known to suffer from stability issues \cite{Antun2020}. More importantly, they have been found to remove or hallucinate structure \cite{Nataraj2020, Muckley2021MRIChallenge}, which is unacceptable in diagnostic imaging.
The usage of empirical PnP methods without strict Lipschitz control within the loop is also subject to caution, as they do 
not offer any guarantee of stability.
By contrast, the PnP approach \eqref{eq:pnp_fbs} with averagedness constraints comes with the stability bounds \eqref{eq:EstForward}, \eqref{eq:H_invert} and \eqref{eq:EstCont}.
This is a step toward reliable deep-learning-based image reconstruction
as it intrinsically limits the ability of the method to overfit and to hallucinate. 

\subsection{Learning an Averaged Denoiser for PnP}
Our approach to improve upon classic image reconstruction is to 
learn the operator $\Op D=\beta \Op R + (1 - \beta)\operatorname{Id}$ in \eqref{eq:pnp_fbs}. We pretrain it for the best performance in the denoising scenario \eqref{first_equ0}.
To that end, we impose the structure of the 1-Lip LSI operator $\Op R$ as an $L$-layer convolutional
neural network with all intermediate layers being composed of the same number ($N$) of feature channels. Specifically, by reverting back to the notation of Section \ref{Sec:VecValued}, we have that
 $\Op  R: \ell_2(\Z^d) \to \ell_2(\Z^d)$ with
\begin{equation}
\Op R = \Op T_{\vec H_L} \circ \V \sigma_{L} \circ \Op T_{\vec H_{L-1}} \circ \V \sigma_{L-1}  \circ ...\circ \Op T_{\vec H_{2}} \circ \V \sigma_2 \circ \Op T_{\vec H_1},
\end{equation}
where $\Op T_{\vec H_k}$ are LSI operators with matrix-valued impulse response $\M H_k[\cdot]$ and $\V \sigma_k=(\sigma_{k,1},\dots,\sigma_{k,N})$ are pointwise 
nonlinearities with the shared activation profile $\sigma_{k,n}: \R \to \R$ within each feature channel. As for the domain and range of the operators, we have that $\Op T_{\vec H_1}: \ell_2(\Z^d) \to \ell^N_2(\Z^d)$ and $T_{\vec H_L}: \ell^N_2(\Z^d)\to \ell_2(\Z^d)$ for the input and output layers,
while $T_{\vec H_k}: \ell^N_2(\Z^d) \to \ell^N_2(\Z^d)$ for $k=2,\dots,(L-1)$. Likewise, $\V \sigma_k:\ell^N_2(\Z^d) \to \ell^N_2(\Z^d)$, with the effect of the nonlinear layer being described by
\begin{align}
\forall \bk \in \Z^d: \quad 
\V \sigma_k\left\{\begin{pmatrix}x_1[\cdot] \\
\vdots\\
x_N[\cdot]
\end{pmatrix}\right\}[\bk]=\begin{pmatrix}\sigma_{k,1}(x_1[\bk]) \\
\vdots\\
\sigma_{k,N}(x_N[\bk])
\end{pmatrix}
\end{align}
with activation functions $\sigma_{k,n}: \R \to \R$  for $n=1,\dots,N$.

Since the Lipschitz constant of the composition of two operators is bounded by the product of their individual Lipschitz constant, we have that
\begin{equation}
    \text{Lip}(\Op R) \leq \text{Lip}(\Op T_{\vec H_1}) \prod_{k=2}^{L} \text{Lip}(\V \sigma_k)\text{Lip}(\Op T_{\vec H_k}),
\end{equation}
which means that we can ensure that $\operatorname{Lip}(\Op R) \leq 1$ by constraining each $\V \sigma_k$ and $\Op T_{\vec H_k}$ to be 1-Lipschitz. 

\subsubsection{Specification of 1-Lip Convolution Layers}
We consider two ways of enforcing $\text{Lip}(\Op T_{\vec H_k})= 1$. Both are supported by our theory.

\begin{itemize}
    \item Spectral normalization (SN) \cite{Ryu2019plug}: During the learning process, we repeatedly renormalize the denoising filters $\vec H_k$ by dividing them by their spectral norm ${\rm Lip}(\Op T_{\vec H_k})=\|\Op T_{\vec H_k}\|=\sigma_{\sup,\M H_k}$ (see Theorem \ref {Theo:L2VecBound}).
    \item BCOP \cite{Li2019}: The Parseval filters $\Op T_{\vec H_k}$ are parameterized explicitly using orthogonal matrices $\M U_k\in \R^{N \times N}$,
    as described in Sections \ref{Sec:ParesevalNtoN}-\ref{Sec:Projection}. We use the implementation provided by the BCOP framework of Li et al.
    As for the last $N$-to-$1$ multifilter $\Op T_{\vec H_L}$, it is not literally Parseval, but rather the adjoint of a Parseval operator, which preserves the $1$-Lip property as well.
\end{itemize}

\subsubsection{Specification of 1-Lip Activation Functions}
The Lipschitz constant of a nonlinear scalar activation $f: \R \to \R$ is given by
\begin{align}
{\rm Lip}(f)=\sup_{t \in \R} |\frac{\dint f(t)|}{\dint t}|.
\end{align}
This result can then be applied to the full nonlinear layer $\V \sigma_k:\ell^N_2(\Z^d) \to \ell^N_2(\Z^d)$ through the pooling formula \eqref{Eq:Lipnonlinearlayer}.
\begin{proposition}
Let $\V f: \ell_2^N(\Z^d)\to\ell_2^N(\Z^d)$ be a generic pointwise nonlinear mapping specified by
\begin{align}
\V f\big\{ \M x[\cdot]\big\}[\V k]=\begin{pmatrix} f_{\bk,1}(x_1[\bk]) \\
\vdots\\
f_{\bk,N}(x_N[\bk]) 
\end{pmatrix}, \quad \bk \in \Z^d
\end{align}
Then, $\V f: \ell_2^N(\Z^d)\to\ell_2^N(\Z^d)$ is Lipschitz continuous if and only if all the component-wise transformations $f_{\bk,n}:\R \to \R$, with $(\bk,n) \in \Z^d \times \{1,\dots,N\}$ are Lipschitz-continuous. Its Lipschitz constant is then given by 
\begin{align}
\label{Eq:Lipnonlinearlayer}
{\rm Lip}(\V f)=L_\V f=\sup_{(\bk,n) \in \Z^d \times \{1,\dots,N\}} {\rm Lip}(f_{\bk,n})<\infty.
\end{align}\end{proposition}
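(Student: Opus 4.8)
The plan is to exploit the fact that the $\ell_2^N(\Z^d)$-norm is completely separable across the joint index $(\bk,n) \in \Z^d \times \{1,\dots,N\}$, so that both the action of $\V f$ and the norm decouple into the scalar contributions $f_{\bk,n}$. Writing $L_\V f \eqdef \sup_{(\bk,n)} {\rm Lip}(f_{\bk,n})$, I would establish the two inequalities ${\rm Lip}(\V f) \le L_\V f$ and $L_\V f \le {\rm Lip}(\V f)$ separately; together they give both the equivalence and the closed-form formula.

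For the direction ${\rm Lip}(\V f) \le L_\V f$ (the ``if'' part), I would start from the expansion
\begin{align*}
\|\V f\{\M x\} - \V f\{\M y\}\|_{\ell_2^N(\Z^d)}^2 = \sum_{n=1}^N \sum_{\bk \in \Z^d} \big| f_{\bk,n}(x_n[\bk]) - f_{\bk,n}(y_n[\bk]) \big|^2
\end{align*}
and bound each summand by ${\rm Lip}(f_{\bk,n})^2 |x_n[\bk] - y_n[\bk]|^2 \le L_\V f^2 |x_n[\bk] - y_n[\bk]|^2$. Resumming over $(\bk,n)$ reconstitutes $L_\V f^2 \|\M x - \M y\|_{\ell_2^N(\Z^d)}^2$, which yields the claim. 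Here I would note that the hypothesis that $\V f$ is a genuine map into $\ell_2^N(\Z^d)$ (equivalently, $\V f\{\M 0\} \in \ell_2^N(\Z^d)$) guarantees that $\V f\{\M x\}$ is square-summable for every input, so that the manipulation is legitimate.

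For the converse $L_\V f \le {\rm Lip}(\V f)$ (the ``only if'' part), the idea is to isolate a single scalar transformation by probing $\V f$ with inputs that differ in exactly one coordinate. Fixing $(\bk_0,n_0)$ and $a,b \in \R$, I would take $\M x = a\,\M e_{n_0}\delta[\cdot - \bk_0]$ and $\M y = b\,\M e_{n_0}\delta[\cdot - \bk_0]$, both finitely supported and hence in $\ell_2^N(\Z^d)$. Because $\V f$ is pointwise and the two inputs coincide (with value $0$) at every location other than $(\bk_0,n_0)$, the difference $\V f\{\M x\} - \V f\{\M y\}$ is supported on that single entry and equals $f_{\bk_0,n_0}(a) - f_{\bk_0,n_0}(b)$. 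The Lipschitz inequality for $\V f$ then collapses to $|f_{\bk_0,n_0}(a) - f_{\bk_0,n_0}(b)| \le {\rm Lip}(\V f)\,|a-b|$, so that ${\rm Lip}(f_{\bk_0,n_0}) \le {\rm Lip}(\V f)$; taking the supremum over all $(\bk_0,n_0)$ gives $L_\V f \le {\rm Lip}(\V f) < \infty$.

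The main subtlety I would flag explicitly is uniformity: individual Lipschitz continuity of each $f_{\bk,n}$ is \emph{not} by itself sufficient for $\V f$ to be Lipschitz, since the constants ${\rm Lip}(f_{\bk,n})$ could grow unboundedly with $(\bk,n)$. The true content of the statement is therefore that Lipschitz continuity of $\V f$ is equivalent to the joint condition that each $f_{\bk,n}$ is Lipschitz \emph{and} that the supremum of their constants is finite, which is exactly what the two inequalities above certify while simultaneously pinning down the value ${\rm Lip}(\V f) = L_\V f$. The only other point requiring care is the well-definedness of $\V f$ as a map on $\ell_2^N(\Z^d)$, which is invoked once in the ``if'' direction.
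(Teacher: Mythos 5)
Your proof is correct and follows essentially the same route as the paper's: the upper bound comes from the identical term-by-term expansion of the squared $\ell_2^N$-norm, and the lower bound uses the very same single-entry test signals $a\,\M e_{n_0}\delta[\cdot-\bk_0]$. The only differences are cosmetic improvements: where the paper extracts sharpness via a two-parameter $\epsilon,\epsilon'$ approximation of the supremum, you obtain the pointwise inequality ${\rm Lip}(f_{\bk_0,n_0})\le {\rm Lip}(\V f)$ directly and then take the supremum (which simultaneously handles the necessity claim that the paper only sketches), and your explicit remark that well-definedness amounts to $\V f\{\M 0\}\in\ell_2^N(\Z^d)$ is a precision the paper leaves implicit.
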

\begin{proof}
Under the assumption that the $f_{\bk,n}$ are Lipschitz continuous, for any $\M x,\M y \in \ell_2^N(\Z^d)$ we have that
\begin{align*}
\|\V f\{\M y\}-\V f\{\M x\}\|^2_{\ell_2^N}&=\sum_{n=1}^N\sum_{\bk \inZ^d} \big|f_{\bk,n}(y_{n}[\bk]) - f_{\bk,n}(x_{n}[\bk])
\big|^2\\
&\le \sum_{n=1}^N\sum_{\bk \inZ^d} {\rm Lip}(f_{\bk,n})^2 \big|y_{n}[\bk] - x_{n}[\bk]\big|^2\\
&\le \left(\sup_{(\bk,n) \in \Z^d \times \{1,\dots,N\}} {\rm Lip}(f_{\bk,n})\right)^2 \  \|\M y- \M x\|^2_{\ell_2^N},
\end{align*}
which proves that ${\rm Lip}(\V f)\le L_\V f$. From the definition of the supremum, for any $\epsilon>0$, there exists some
$(\bk_0,n_0) \in \Z^d \times \{1,\dots,N\}$ such that ${\rm Lip}(\V f)\le L_\V f\le (1 + \epsilon) L_0$ with $L_0={\rm Lip}(f_{\bk_0,n_0})$.
Likewise, since $f=f_{\bk_0,n_0}: \R \to \R$ is $L_0$-Lipschitz continuous, for any $\epsilon'>0$ there exist some $x,y \in \R$ with $x\ne y$ such that $(1+\epsilon')|f(y)-f(x)|\ge L_0 |y-x|$. We then consider the corresponding (worst-case) signals
$\tilde{\M x}=x \M e_{n_0}\delta[\cdot- \bk_0]$ and $\tilde{\M y}=y \M e_{n_0}\delta[\cdot- \bk_0]$, for which
we have that ${\rm Lip}(\V f) \|\tilde{\M y}-\tilde{\M x}\|_{\ell_2^N}\ge \|\V f(\tilde{\M y})-\V f(\tilde{\M x})\|_{\ell_2^N}\ge \frac{L_{\V f}}{(1+\epsilon)(1+\epsilon')} \|\tilde{\M y}-\tilde{\M x}\|_{\ell_2^N}$. Since $\epsilon'$ and $\epsilon$ can be chosen arbitrarily small, the Lipschitz bound is sharp with ${\rm Lip}(\V f)=L_\V f$.
The same kind of worst-case signals can also be used to show the necessity of the Lipschitz continuity of each $f_{\bk,n}: \R \to \R$.
\end{proof}

Accordingly,  in our experiments, we have considered two configurations.
\begin{itemize}
    \item Fixed activation as a rectified linear unit (ReLU) with ${\rm Lip}({\rm ReLU})=\|\Indic_+\|_{L_\infty}=1$.
    \item Learnable linear spline (LLS) \cite{Ducotterd2024}, with learned activations $\sigma_{k,n}: \R \to \R$ s.t.\ ${\rm Lip}(\sigma_{k,n})=1$. These nonlinearities  are shared within each convolution channel $(k,n)\in \{2,\dots,L\} \times \{1,\dots,N\}$. They are parameterized using linear B-splines subject to a second-order total-variation regularization that promotes continuous piecewise-linear solutions with the fewest linear segments \cite{Bohra2020b,Unser2019c}.
\end{itemize}

\subsubsection{Image Denoising Experiments}
We train 1-Lip denoisers with $L=8$, $N=64$, and filters of size $(3 \times 3)$. The training dataset consists of 238400 patches of size $(40 \times 40)$ taken from the BSD500 image dataset \cite{Arbelaez2011}.
All noise-free images $\M s$ in \eqref{first_equ0} are normalized to take values in $[0, 1]$. They are then corrupted with additive Gaussian noise of standard deviation $\sigma$ to train the denoiser $\Op D$ for the regression task $\Op D\{\M z\}\approx \M s$.
The performance on the BSD68 test set is provided in Table~\ref{table:denoising_perf} for $\sigma =  5/255, 10/255$.
The general trend for each experimental condition is the same: The Parseval filters parameterized by BCOP consistently outperform the $1$-Lip filters obtained by simple spectral normalization. There is also a systematic benefit  in the utilization of learned $1$-Lip spline activations (LLS), as compared to the standard ReLU design.

\begin{table}[tb]
\centering
\caption{PSNR and SSIM on BSD68 for two noise levels.}
\label{table:denoising_perf}
\begin{tabular}{lcccc}
\hline
\hline
Noise level & \multicolumn{2}{c}{ $\sigma = 5/255$ } & \multicolumn{2}{c}{ $\sigma = 10/255$ } \\
Metric & PSNR & SSIM & PSNR & SSIM \\
\hline
ReLU-SN & 35.78 & 0.9297 & 31.48 & 0.8533\\
ReLU-BCOP & 36.10 & 0.9386 & 31.92 & 0.8735 \\
LLS-SN & 36.68 & 0.9504 & 32.36 & 0.8883\\
LLS-BCOP & \textbf{36.86} & \textbf{0.9546} & \textbf{32.55} & \textbf{0.8962} \\
\hline\hline
\end{tabular}
\vspace{.2cm}
\end{table}

\subsection{Numerical Results for PnP-FBS}\label{sec:exp_pnp}
We now demonstrate the deployment of our learned denoisers in the PnP-FBS algorithm for image reconstruction. To that end, we select the data-fidelity term as $J(\mathbf{y}, \mathbf{As}) = \frac{1}{2}\|\mathbf{y} - \mathbf{As}\|_2^{2}$, where the matrix $\M A$ simulates the physics of biomedical image acquisitions \cite{McCann2019}.
To ensure the convergence of \eqref{eq:pnp_fbs}, we set $\alpha = 1/ \|\vec A^\Tr \vec A\|$. Our denoiser is defined as $\Op D = \beta \Op R + (1 - \beta)\text{Id}$, while the constant $\beta \in [0, 1)$ and the training noise level $\sigma \in \{5/255, 10/255\}$ are tuned for best performance. 
In our experiments, we noticed that the best $\beta$ is always lower than 1/2, which means that the mathematical assumptions for Proposition \ref{prop:stability1} are met. We also compare our reconstruction algorithms with the classic total-variation (TV) method \cite{Chambolle2004}.

In our MRI experiment, the goal is recover $\M s$ from $\vec y=\mathbf{M F s}+\mathbf{n} \in \mathbb{C}^{M}$, where $\mathbf{M}$ is a subsampling mask (identity matrix with some missing entries), $\mathbf{F}$ is the discrete Fourier-transform matrix, and $\mathbf{n}$ is a realization of a complex-valued Gaussian noise characterized by $\sigma_\mathbf{n}$ for the real and imaginary parts. 
We investigated three $k$-space sampling schemes (random, radial, and Cartesian=uniform along the horizontal direction), each giving rise to a specific sub-sampling mask.

The reconstruction performance for various $k$-space sampling configurations and design choices is reported in Table~\ref{table:reconstruction_performance}. Similarly to the denoising experiment, BCOP always outperforms SN, while LLS brings additional improvements. Our CNN-based methods generally perform better than TV (standard reconstruction algorithm), while they essentially offer the same theoretical guarantees (consistency and stability) \cite{delAguilaPla2023}. The only notable exception is the TV-regularized reconstruction of Brain with Cartesian sampling, which is of better quality than the one obtained with SN-ReLU.
The results for Brain and Bust with the Cartesian mask are shown in Figures~\ref{fig:mri} and \ref{fig:mri2}, respectively. 
In the lower panel of Figure~\ref{fig:mri}, we observe stripe-like structures in the zero-fill reconstruction.
These are typical aliasing artifacts that result from the subsampling in the horizontal direction in Fourier space.
They are significantly reduced with the help of TV (which is routinely used for that purpose) as well as in the LLS-BCOP reconstruction, which overall yields the best visual quality.

\begin{figure}[tpb]
    \centering
    \includegraphics[scale=0.40]{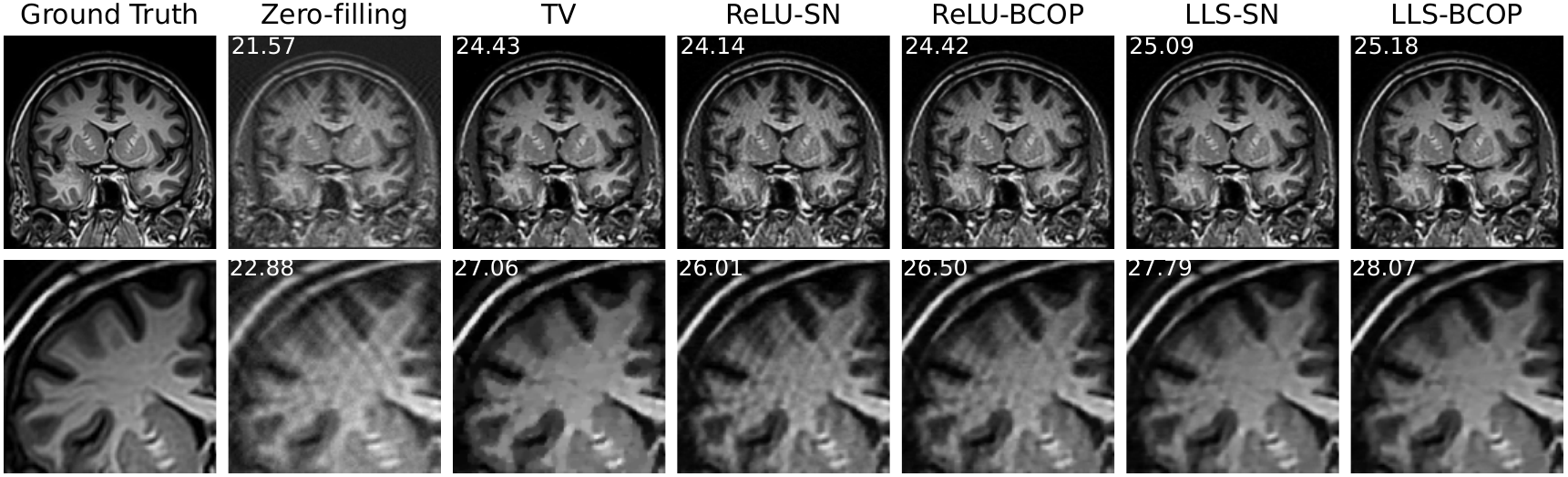}
    \caption{Ground truth, zero-fill reconstruction $\vec H^{\Tr} \vec y$, and PnP-FBS reconstruction using several network parameterizations on the Brain image with the Cartesian mask.\\ Lower panel: zoom of a region of interest. The SNR is evaluated with respect to the groundtruth (left image)
and is overlaid in white.
}
    \label{fig:mri}
    \vspace{.35cm}
\end{figure}

\begin{figure}[tpb]
    \centering
    \includegraphics[scale=0.40]{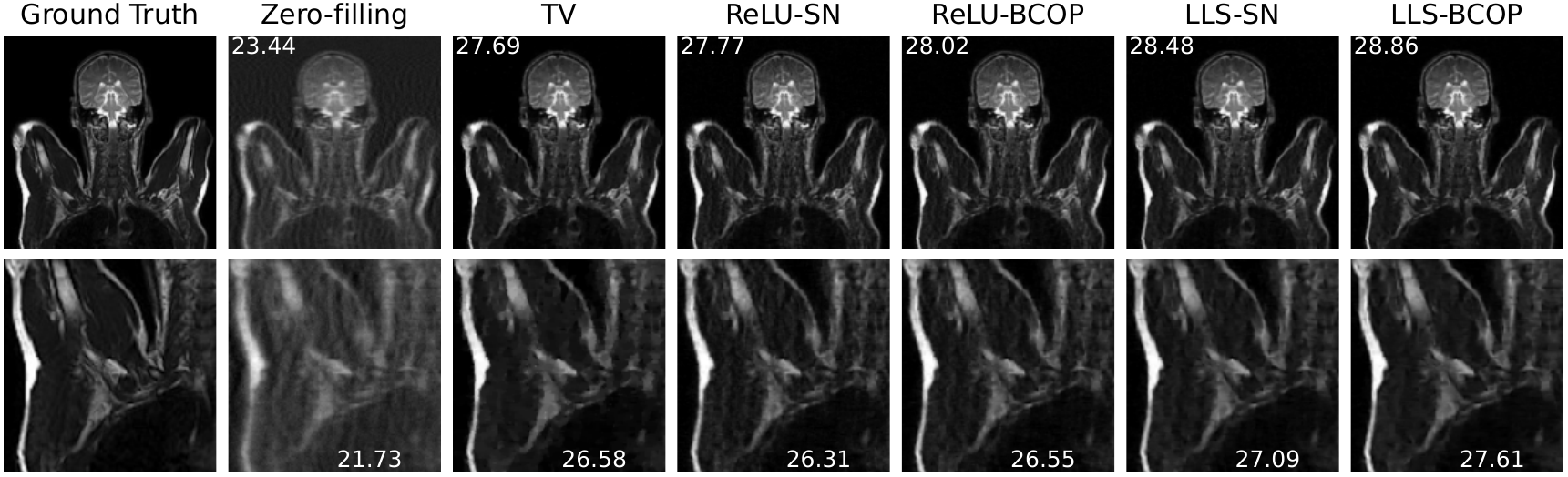}
    \caption{Ground truth, zero-fill reconstruction $\vec H^{\Tr} \vec y$, and PnP-FBS reconstruction using several network parameterizations on the Bust image with the Cartesian mask.\\ Lower panel: zoom of a region of interest. The SNR is evaluated with respect to the groundtruth (left image)
and is overlaid in white.
}
    \label{fig:mri2}
    \vspace{.35cm}
\end{figure}

\begin{table}[tp]
\centering
\caption{PSNR and SSIM for the MRI reconstruction experiment.}%
\label{table:reconstruction_performance}
\begin{tabular}{lcccccc}
  \hline
   \hline

    Subsampling mask  & \multicolumn{2}{c}{Random} & \multicolumn{2}{c}{Radial} & \multicolumn{2}{c}{Cartesian}
    \\
       Image type & Brain &  Bust  & Brain & Bust   & Brain & Bust
    \\ \hline
    Zero-filling & 24.68 & 27.31 & 23.85 & 25.13 & 21.57 & 23.44\\
    TV & 30.37 & 32.29 & 29.46 & 31.58 & 24.43 & 27.69 \\
    ReLU-SN & 32.45 & 33.36 & 30.92 & 32.33 & 24.14 & 27.77\\
    ReLU-BCOP & 32.53 & 33.67 & 30.93 & 32.72 & 24.42 & 28.02 \\
    LLS-SN & 33.34 & 34.32 & 31.82 & 33.35 & 25.09 & 28.48\\
    LLS-BCOP & \textbf{33.61} & \textbf{34.67} & \textbf{32.09} & \textbf{33.72} & \textbf{25.18} & \textbf{28.86} \\\hline
    \hline
\end{tabular}
\end{table}
\vspace{.3cm}
\section{Conclusion}

In this chapter, we have conducted a systematic investigation of multichannel convolution operators with a special
emphasis on the class of LSI Parseval operators.
What sets the Parseval operators apart from standard filterbanks is their lossless nature (energy conservation). This makes them ultra-stable and particularly easy to invert by mere flow-graph transposition of the computational architecture.
The other important feature is that the Parseval property is preserved through composition. Formally, this means that the Parseval filterbanks form a (non-commutative) operator algebra. On the more practical side, this enables the construction of higher-complexity filters through the chaining of elementary parametric modules, as exemplified in Section 4.

These properties make Parseval filterbanks especially attractive for the
design of robust (e.g, $1$-Lip) convolutional networks. We have demonstrated the application of such Parseval CNNs for the reconstruction of biomedical images. 
We have shown that the use of pre-trained Parseval filterbanks generally improves the
quality of iterative image reconstruction, while it offers the same mathematical guarantees as the conventional ``handcrafted'' reconstruction schemes. The training of such structures is straightforward---it is done before hand on a basic denoising task.
Further topics of research include (i) the investigation and comparison of different factorization schemes with the view of identifying the most effective ones, and (ii) the determination of the performance limits of CNN-based approaches under the mathematical constraint of stability/trustworthiness.

\bibliographystyle{abbrv}

\bibliography{/Users/unser/MyDrive/Bibliography/Bibtex_files/Unser}
%
%
\end{document}